\pgfplotsset{compat=1.15}
\newcommand{\tablepath}{DUMMY}
\algnewcommand\algorithmicinput{\textbf{Initialization:}}
\algnewcommand\INPUT{\item[\algorithmicinput]}
\newtheorem{theorem}{Theorem}
\newtheorem{lemma}[theorem]{Lemma}
\newtheorem{definition}[theorem]{Definition}
\newtheorem{corollary}[theorem]{Corollary}
\newtheorem*{remark}{Remark}
\newtheorem{assumption}[theorem]{Assumption}
\newcommand{\distrib}[2]{
\left\langle #1,#2\right\rangle
}
\newcommand{\RR}{\mathbb{R}}
\newcommand{\scalarproduct}[2]{\left\langle #1,#2\right\rangle}
\renewcommand{\d}{\textup{d}}
\newcommand{\diff}{\textup{d}}
\newcommand{\norm}[1][\cdot]{\left\lVert #1\right\rVert}
\newcommand{\R}{\mathcal{R}}
\newcommand{\D}{\mathcal{D}}
\newcommand{\CC}{\mathbb{C}}
\newcommand{\ds}{\displaystyle}
\newcommand{\N}{\mathbb{N}}
\newcommand{\ee}{\textup{e}}
\newcommand{\I}{\ensuremath{\mathrm{i}}}
\renewcommand{\P}{\mathcal{P}}
\renewcommand{\S}{\mathcal{S}}
\newcommand{\T}{\mathcal{T}}
\newcommand{\A}{\mathcal{A}}
\DeclareMathOperator{\range}{range}
\DeclareMathOperator*{\argmin}{arg\,min}
\newcommand{\fvec}{\mathrm{f}}
\newcommand{\gvec}{\mathrm{g}}
\newcommand{\ai}[3]{{#1}_{#2}^{#3}}
\DeclareFontFamily{U}{wncy}{}
\DeclareFontShape{U}{wncy}{m}{n}{<->wncyr10}{}
\DeclareSymbolFont{mcy}{U}{wncy}{m}{n}
\DeclareMathSymbol{\Sh}{\mathord}{mcy}{"58} 
\newlength{\imWidth}
\newcommand\maga{3}
\newcommand\zshift{0.05}
\newcommand{\spyfig}[2][spywidth=0.16,spyheight=0.16,spyposx=-0.24,spyposy=0.33]{
    \setkeys{spyfig@keys}{#1}
    \begin{tikzpicture}[    
			baseline=(current bounding box.north),
			spy using outlines={rectangle,white,dashed,magnification=\maga,width=\maga*\spyfig@spywidth\imWidth, height=\maga*\spyfig@spyheight\imWidth, connect spies}]
			\node {
				\includegraphics[height=\imWidth, width=\imWidth,keepaspectratio]{#2}
			};
			\spy on (\spyfig@spyposx\imWidth,\spyfig@spyposy\imWidth) in node [left] at (0.5\imWidth+\zshift\imWidth,-0.5\imWidth-\zshift\imWidth);
		\end{tikzpicture}
}
\newcommand{\vx}{\mathrm{x}}
\newcommand{\vz}{\mathrm{z}}
\newcommand{\vy}{\mathrm{y}}
\newcommand{\bil}[2]{\left\langle #1,#2\right\rangle}
\renewcommand{\d}{\textup{d}}
\title{The method of the approximate inverse for limited-angle CT}
\author{Bernadette Hahn, Ga\"el Rigaud and Richard Schmähl}
\date{} 
\begin{document}

\maketitle

\begin{abstract}
Limited-angle computerized tomography stands for one of the most difficult challenges in imaging. Although it opens the way to faster data acquisition in industry and less dangerous scans in medicine, standard approaches, such as the filtered backprojection (FBP) algorithm or the widely used total-variation functional, often produce various artefacts that hinder the diagnosis. With the rise of deep learning, many modern techniques have proven themselves successful in removing such artefacts but at the cost of large datasets. 
In this paper, we propose a new model-driven approach based on the method of the approximate inverse, which could serve as new starting point for learning strategies in the future. In contrast to FBP-type approaches, our reconstruction step consists in evaluating linear functionals on the measured data using reconstruction kernels that are precomputed as solution of an auxiliary problem. With this problem being uniquely solvable, the derived limited-angle reconstruction kernel (LARK) is able to fully reconstruct the object without the well-known streak artefacts, even for large limited angles. However, it inherits severe ill-conditioning which leads to a different kind of artefacts arising from the singular functions of the limited-angle Radon transform. The problem becomes particularly challenging when working on semi-discrete (real or analytical) measurements. We develop a general regularization strategy, named constrained limited-angle reconstruction kernel (CLARK), by combining spectral filter, the method of the approximate inverse and custom edge-preserving denoising in order to stabilize the whole process. We further derive and interpret error estimates for the application on real, i.e. semi-discrete, data and we validate our approach on synthetic and real data. 
\\[1em] 
 
\textbf{Keywords:} Limited-angle CT, severe, ill-posedness, approximate inverse, spectral filtering, edge-preserving denoising, approximation with radial basis functions, error analysis.
\end{abstract}
 
\section{Introduction}
 
Tomography is a powerful imaging technique that allows for the visualization of an object's interior structure based on indirect measurements. In computerized tomography (CT), the data are acquired by rotating an X-ray source around the object while emitting X-ray beams through the specimen whose intensity loss is recorded at a detector panel. From a mathematical point of view, the measured data correspond to integrals along straight lines of a function $f$ describing the X-ray attenuation coefficient of the investigated object. In two dimensions, the operator that maps $f$ into the set of its line integrals is the Radon transform
\begin{align*} {\cal R} f(s,\theta) = \int_{\mathbb{R}^2} f(\vx) \delta(s-\vx^\top  \theta) \,\mathrm{d} \vx , \end{align*}
with $s \in \mathbb{R}$ and $\theta\in S^1 = \{(\cos\phi,\sin\phi), \ \phi \in [0,2\pi)\}$, where $S^1$ denotes the unit circle. Thus, the core task in CT (with parallel geometry) lies in recovering $f$ supported in the unit disk from its projection data 
$$
g(s,\theta) = \R f(s,\theta), \quad s\in(-1,1),\ \theta = (\cos\varphi,\sin\varphi)^\top \text{ with } |\varphi|\leq \pi/2.
$$ 
The solution of this inverse problem is well understood and a variety of reconstruction algorithms have been developed over the years, most notably the standard filtered backprojection algorithm (FBP) \cite{Natterer86}.
 
However, due to physical, mechanical or safety limitations, many industrial and medical applications (e.g. laminography, electron microscopy or dental tomography) come with the constraint that the projection data can only be measured for a restriced angluar range $C_\Phi :=\{(\cos\varphi,\sin\varphi)^\top :  |\varphi | \leq (\pi - \Phi)/2\}$ with $\Phi > 0$. This scenario is referred to as \emph{limited-angle CT}. Throughout the article, we denote by $\R_\Phi$ the respective forward operator which is given by restricting $\R$ to $C_\Phi$. Applying standard solution techniques such as FBP to limited-angle data typically leads to reconstructions with blurred regions, obscured structural details and (streak) artefacts, cf. Figure \ref{fig:Data_FBP_SL}. This highlights the need for reconstruction methods specifically tailored to the limited-angle setting. 

\begin{figure}[!h]
    \centering
    \begin{subfigure}{0.24\linewidth}\centering
    	\includegraphics[width=\linewidth]{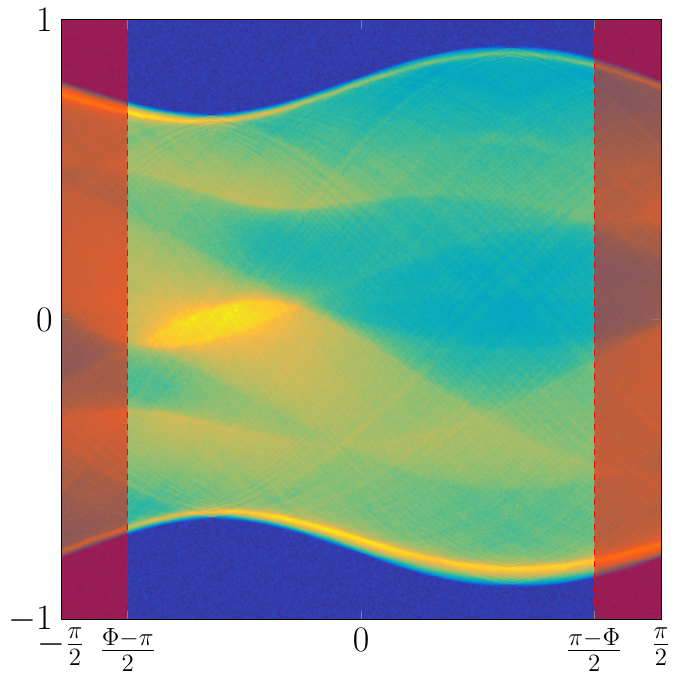}
        \caption{}
    \end{subfigure}
    \begin{subfigure}{0.24\linewidth}\centering
		\includegraphics[width=\linewidth]{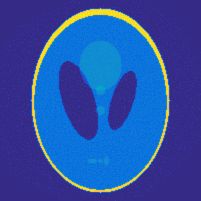}
        \caption{}
	\end{subfigure}	
	\begin{subfigure}{0.24\linewidth}\centering
		\includegraphics[width=\linewidth]{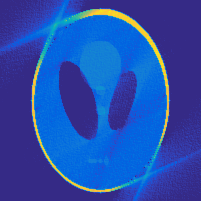}
        \caption{}
	\end{subfigure}	
    \caption{Illustration of standard limited-angle reconstructions for the Shepp-Logan phantom on a grid 201x201. For a better display in the diagonal of the limited-angle artefacts, the phantom is rotated of $45^\circ$ regarding the CT-scan. (a) Noisy data $g^\delta$ with in red the cut-off area here for $\Phi=40^\circ$. (b-c) FBP reconstruction using the Shepp-Logan-filter for $\Phi=0^\circ$ and $\Phi=40^\circ$, respectively.}
    \label{fig:Data_FBP_SL}
\end{figure}

In general, tomographic reconstruction is an ill-posed inverse problem, \textit{i.e.} small perturbations in the data, e.g. due to measurement errors, can lead to a large, unbounded error in the solution. Thus, the reconstruction step requires regularization methods in order to balance the influence of the data error and accuracy in the solution. While the full-data problem is only mildly ill-posed (more precisely of order 1/2 in the Sobolev scale \cite{Natterer86}), the inverse problem in the limited-angle case is severely ill-posed. This is revealed in the singular value decomposition derived by Alfred Louis in \cite{louis1986incomplete}. For a discussion of the mapping properties between Sobolev spaces, we refer to \cite{Natterer86}. This severe ill-posedness means that data errors are amplified exponentially, which makes the regularization of this type of problem particularly challenging.  \\

To solve the limited-angle problem, several approaches have been investigated in the literature which need at the same time to handle noise and to somehow account for the missing data in order to reduce artefacts. 
 
A first class of methods focuses on data completion. In \cite{louis1980picture}, for instance, Alfred Louis developed a procedure based on the Helgason-Ludwig consistency conditions and a spectral decomposition to extrapolate the measured data to the missing angular range. Other methods rely solely on the moment conditions \cite{peres1979} or estimate the missing data in Fourier domain using analytic continuation \cite{inouye1979}.
 
A key to designing tailored reconstruction algorithms lies in the deep understanding of how the missing angular range affects the overall information content in the data -- specifically, which object features can be reliably reconstructed and what causes the undesirable artefacts. In \cite{louis1986incomplete}, this was studied in terms of the singular value spectrum. A stability analysis revealed that the components of the object connected with the singular functions belonging to large singular values can well be reconstructed. The smaller the missing range the larger is the number of these functions. Furthermore, the structure of the singular functions associated to small singular values can help to identify which features in the image are stable and which arise as artefacts due to the angular limitation.
 
Since the Radon transform is an elliptic Fourier integral operator, techniques from microlocal analysis can be used to classify which singularities can, and which cannot, be stably reconstructed from limited CT data \cite{quinto1993}. A characterization of the artefacts arising in FBP- and $\Lambda$-reconstruction was derived in \cite{frikel2013} which explains precisely where and why artefacts are created by these algorithms, followed by a suggestion on how to mitigate these artefacts. However, although the streak artefacts were suppressed, still structural information remain missing as they are not "{}seen"{} by the reconstruction operator. 
 
To address this fundamental limitation, variational regularization approaches gained a lot of interest since they allow the incorporation of suited priors, often guided by the theoretical analysis described above. 
Proposed priors include classic total variation (TV) \cite{persson2001total}, anisotropic \cite{chen2013limited} and directional TV \cite{zhang2021directional}, weighted relative structures \cite{gong2023structure}, as well as wavelets or curvelets \cite{frikel2013sparse}. The main challenge consists in choosing the regularizer to account for both the missing data and the noise at the same time. Thus, combinations of different regularizers are studied as well, for instance TV and curvelets \cite{goeppel2024l1TV}. The regularizer can also be modelled by edge-preserving diffusion along a certain direction, and edge-preserving smoothing along the corresponding perpendicular direction \cite{xu2019image}. The variational framework has further been extended to recover an inpainted sinogram simultaneously with the reconstruction \cite{tovey2019directional}.
 
Growing computational resources have enabled data-driven strategies, particularly deep learning, to complement traditional model-based regularization in addressing the challenges associated with the limited-angle problem. A wide range of approaches has been explored in this context, including simple post-processing for reducing artefacts produced by classical reconstruction schemes like FBP \cite{antholzer2019deep, goppel2025data}, data-driven data completion \cite{guo2025advancing}, as well as end-to-end training methods which learn the mapping from partial projections to the scanned object \cite{barutcu2021limited, Germer23helsinkichallenge}. 

The black-box nature of deep learning can be partially mitigated by hybrid strategies that integrate model-based and data-driven components. For instance, \cite{bubba2019} starts from a sharelet-based variational framework and delegates the task of inferring the invisible shearlet coefficients to a trained neural network. Invisible here refers to the analysis in \cite{quinto1993}. This is not the only instance where insights from microlocal analysis have guided the design of data-driven methods: In \cite{bubba2021}, a convolutional neural network is trained to learn an unknown part of a pseudodifferential operator, and hence effectively a reconstruction operator in limited-angle CT. Another example is the method proposed in \cite{andrade2022} which simultaneously extracts images and their wavefront sets via neural networks, thereby filling in missing data in a way that is consistent with the propagation of singularities. Nevertheless, despite their success, machine learning approaches often suffer from limited interpretability and face difficulties in generalizing to such ill-posed/ill-conditioned inverse problems.

\subsubsection*{Main contribution of the article}
 
A powerful and versatile regularization strategy is given by the method of the approximate inverse. This method was introduced in \cite{louismaass90,louis96} to solve ill-posed problems $\A f = g$ with a linear operator $\A:X\to Y$ between Hilbert spaces X and Y. The underlying idea is to recover linear functionals $f^\gamma(\cdot)=\langle f,e^\gamma_\cdot\rangle$ of $f$ with chosen mollifier $e^\gamma$ from the measured data. By solving the auxiliary problem $\A^* \ai{\psi}{\vx}{\gamma} = \ai{e}{\vx}{\gamma}$ for the reconstruction point $\vx$, one obtains 
$$ 
f^\gamma (\vx) = \scalarproduct{f}{\ai{e}{\vx }{\gamma}} = \scalarproduct{f }{\A^* \ai{\psi}{\vx}{\gamma} } = \scalarproduct{g}{\ai{\psi}{\vx}{\gamma} },$$
i.e. we can recover $f^\gamma$ by evaluating linear functionals on the data that are induced by the \emph{reconstruction kernel} $\psi^\gamma_\vx$. 
The method has been extensively studied with respect to its regularization properties \cite{louis1999unified}, has been further generalized to a variety of settings (including e.g. Banach spaces \cite{schuster2010Banach}) and has been extended to the stable extraction of features (e.g. contours) directly from measured data \cite{louis2011feature}. 

Considering linear functionals with mollifier $e^\gamma$ attenuates and controls the high frequencies in the solution. By solving the auxiliary problem instead of the original inverse problem, the kernels $\psi^\gamma_\vx$ can be precomputed independently of the measured data and, with the analytically known mollifier, in particular independently of any measurement errors. Furthermore, this elegant approach allows to exploit invariance properties of the operator $\A$ to efficiently compute the reconstruction kernels.
 
From \cite{Natterer86}, we know that $f$ is uniquely determined by perfect data $g=\R_\Phi f$. The property tends to extend to the discrete case in which the solution can be numerically built artefact-free from the limited-angle measurement by standard optimization techniques. The issue is that this approach does not resist to any measurement errors. Truncating the smallest singular values does not help as the solution tends to converge towards the FBP reconstruction in that case. 

Following this logic, the auxiliary problem above is expected to be numerically solvable resulting in a reconstruction kernel able to recover the missing regions. In fact, a first step was taken by Alfred Louis himself in \cite{louis2006development}, where he derived a series representation of the reconstruction kernel in the continuous case based on the singular value decomposition of $\R_\Phi$. Its application, however, remains challenging due to the severe ill-posedness of the problem: Errors in the data still lead to severe artefacts, although their nature changes for the method of the approximate inverse, as we will see in \Cref{sec:semi_discrete-AI}. 

To overcome this challenge and to successfully apply the method to real measured data, we proceed step-by-step:

\begin{enumerate}
    \item[1.)] We first derive a \emph{limited-angle reconstruction kernel} (LARK) in the continuous case which accounts for the specific way $\R_\Phi$ encodes the frequencies of $f$ in data (\Cref{sec:continuousLARK}).     
    \item[2.)] Then, we combine the method of the approximate inverse with a constraint on the solution space to handle noisy or \textit{out-of-range} data in \Cref{sec:denoising} -- an approach we name \textit{constrained LARK} (CLARK).
    \item[3.)] In \Cref{sec:discrete}, we compute the \textit{simple} fully-discrete approach for  (C)LARK which delivers satisfactory results on noisy data but fails on semi-discrete data.
    \item[4.)] In order to extend this strategy to semi-discrete data (such as real measurements), we need to represent the unknown function $f$ by a suited interpolation operator. This operator is constructed and error estimates are derived in \Cref{sec:semi_discrete}. As the interpolation operator induces an increase in ill-conditioning, this necessitates a smoothing step of the data. Numerical results on analytical and real data demonstrate that LARK and CLARK deliver convincing results.  
\end{enumerate}

The article concludes with a summary of the main results and a brief outlook.
 
\section{The continuous limited-angle Radon transform} \label{sec:continuous}
 
\subsection{Mathematical background}\label{Sec:MathBackground}

We start by reviewing properties of the limited-angle forward operator $\R_\Phi$  from the literature that will play a central role throughout the article. As presented in the introduction, we denote the considered angular range by $C_\Phi :=\{(\cos\varphi,\sin\varphi)^\top :  |\varphi | \leq (\pi - \Phi)/2\}$ with $\Phi\in(0,\pi)$.

\begin{theorem}\label{theo:uniqueness} Let $C \subseteq S^1$ be a set of directions such that no non-trivial homogeneous polynomial vanishes on $C$. If $f\in C^\infty(\mathbb{R}^2)$ is compactly supported and $\R_\Phi f(\cdot,\theta) = 0$ for $\theta \in C$, then $f=0$. 
\end{theorem}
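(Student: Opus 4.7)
The plan is to use the classical moment approach (the Helgason–Ludwig consistency condition), which turns the injectivity question into a question about polynomials vanishing on $C$. For each $k\in\mathbb{N}$ and direction $\theta\in S^1$, I would introduce the $k$-th moment of the line integral data,
\begin{equation*}
M_k(\theta) \;:=\; \int_{\mathbb{R}} s^k \,\R f(s,\theta)\,\diff s.
\end{equation*}
Substituting the definition of $\R f$ and interchanging the order of integration (justified by the compact support of $f$) collapses the $\delta$ and gives the identity
\begin{equation*}
M_k(\theta) \;=\; \int_{\mathbb{R}^2} f(\vx)\,(\vx^\top\theta)^k\,\diff\vx,
\end{equation*}
which is, by the multinomial theorem, a \emph{homogeneous polynomial of degree $k$ in $\theta$}.

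The assumption $\R_\Phi f(\cdot,\theta)=0$ for $\theta\in C$ immediately forces $M_k(\theta)=0$ for every $k\ge 0$ and every $\theta\in C$. At this point I would invoke the hypothesis on $C$: since $M_k$ is a homogeneous polynomial and $C$ is not contained in the zero set of any non-trivial homogeneous polynomial, it must be that $M_k\equiv 0$ on the whole sphere $S^1$, hence on $\mathbb{R}^2$ by homogeneity. Expanding $(\vx^\top\theta)^k=\sum_{|\alpha|=k}\binom{k}{\alpha}\theta^\alpha \vx^\alpha$ and reading off the coefficients (which are linearly independent as polynomials in $\theta$) then yields that the cartesian moments $\int_{\mathbb{R}^2}\vx^\alpha f(\vx)\,\diff\vx$ vanish for every multi-index $\alpha$.

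To conclude, I would exploit that $f\in C_c^\infty(\mathbb{R}^2)$, which by the Paley–Wiener theorem makes $\widehat f$ an entire function on $\mathbb{C}^2$. Its Taylor coefficients at the origin are (up to factorials and $\I$-powers) precisely the moments just shown to vanish, so $\widehat f$ vanishes to infinite order at $0$ and therefore identically, giving $f=0$. The conceptually decisive step is the second one: it is the geometric condition on $C$ that compensates for the missing angular range, since each $M_k$ is a \emph{different} polynomial and a priori only vanishes on $C$. This condition is automatically satisfied, for instance, whenever $C$ contains infinitely many distinct directions modulo sign (a non-zero homogeneous polynomial of degree $k$ has at most $2k$ roots on $S^1$), and in particular holds for the limited-angle cone $C_\Phi$ with any $\Phi\in(0,\pi)$, so the theorem indeed applies to the setting of the article.
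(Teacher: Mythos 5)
Your proof is correct and complete, and it is essentially the classical argument behind the result that the paper cites from Natterer's book in lieu of giving a proof: the moments $\int_{\mathbb{R}} s^k\,\R f(s,\theta)\,\diff s$ are homogeneous polynomials of degree $k$ in $\theta$, the hypothesis on $C$ forces them all to vanish identically, and Paley--Wiener analyticity of $\widehat f$ finishes the job. The only (cosmetic) difference from the textbook version is that Natterer phrases the key step via the Fourier slice theorem --- the degree-$k$ homogeneous terms in the Taylor expansion of the entire function $\widehat f$ vanish on the rays through $C$ --- but these terms are, up to constants, exactly your moment polynomials $M_k$, so the two presentations are equivalent; your closing observation that an infinite $C$ (in particular $C_\Phi$) satisfies the polynomial condition is also correct, since a non-trivial homogeneous polynomial of degree $k$ in two variables has at most $2k$ zeros on $S^1$.
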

\begin{proof}
See \cite{Natterer86}.
\end{proof}
As a consequence of the analyticity of compact functions, this result establishes that compactly supported functions are uniquely determined by $\R_\Phi f$. In \cite{louis1986incomplete}, this conclusion was drawn from the singular value decomposition of $\R_\Phi$. In order to analyze the singular values and singular functions, we will restrict ourselves to the unit disk  $\Omega:=\{x\in \mathbb{R}^2, |x|\leq 1\}.$
\begin{theorem}\label{theo:SVD_R_Phi} The singular values of the operator
$$
\R_\Phi : L_2(\Omega) \to  L_2(Z_\Phi;w^{-1}) 
$$
with $w := \sqrt{1-s^2}$ and $Z_\Phi := [-1,1]\times C_\Phi$ are given by 
$$
\sigma_{ml} = 2\left( \frac{\pi}{m+1} \lambda_l\left( m+1,\frac{\Phi}{\pi} \right) \right)^{1/2}, \quad m \in \mathbb{N}_0, \ l = 0,1,\ldots,m,
$$
where $\lambda_l(m+1,\Phi/\pi)$ are the eigenvalues of the matrix
$$ 	A_m(\Phi):=
  \begin{pmatrix}
  a_0 & a_1 & \cdots & a_m \\
  a_1 & a_0 & \cdots & a_{m-1} \\
  \vdots & & & \\
  a_m & a_{m-1} & \cdots & a_0
		\end{pmatrix}, \qquad 
  a_l = \frac{1}{\pi}
  \begin{cases}
      2\Phi, \quad & l=0 \\
      \frac{\sin 2 l \Phi}{l}, & l\neq 0.
  \end{cases}
$$  
We denote by $(\sigma_{ml}, u_{ml}, v_{ml})_{m,l}$ the singular system of $\R_\Phi$, \textit{i.e.}
$$
\R_\Phi v_{ml} = \sigma_{ml} u_{ml}, \quad 
\R_\Phi^* u_{ml} = \sigma_{ml} v_{ml}, \quad m \in \mathbb{N}_0, \ l = 0,1,\ldots,m.
$$
Let $\lambda_l \coloneqq\lambda_l (m,\Phi)\in\RR$, $d_l \coloneqq d_l (m,\Phi)\in\RR^{m+1} (0\leq l \leq m)$ s. t. $1>\lambda_0\geq \lambda_2\geq \cdots\geq \lambda_{m}>0$ and $A(m,\Phi)d_l =\lambda_l$ with $\norm[d_l ]=1$. Then, it holds for the singular functions
\begin{equation*}
\begin{dcases}
    {v}_{ml  }(\vx)&:= \epsilon_l  \sum_{l=0}^{m}d_l  Q_{m\left|2l-m\right|}(\vx)\ee^{\I(2l-m)\arg(\vx)}\\
    {u}_{ml  }(s,\theta) &:= \frac{\epsilon_l}{\sqrt{\lambda_l }}w(s)U_m(s)  \sum_{l=0}^{m}d_l \ee^{\I(m-2l)\arg(\theta)}
\end{dcases}
\quad\text{with}\quad 
\epsilon_l \coloneqq 
\begin{cases}
1&l \text{ even}\\\I&l \text{ odd}
\end{cases},
\end{equation*}
where $U_m$ denotes the Chebyshev polynomials of second kind and $Q_{ml}(x)\coloneqq \norm[x]_2^l P_{\frac{m-l}{2}}^{(0,l)}\left(2\norm[x]_2^2-1\right)$ with $P_n^{(0,l)}$ the Jacobi Polynomials.
\end{theorem}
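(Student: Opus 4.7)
My plan is to leverage the classical Davison--Cormack SVD of the full-angle Radon transform and reduce the limited-angle problem, for each polynomial degree $m$, to the finite-dimensional diagonalization of the Gram matrix that records the non-orthogonality of angular exponentials on $C_\Phi$.

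First I would recall the full-angle decomposition: for each $m\in\N_0$ and each $k\in\{-m,-m+2,\ldots,m\}$ the functions
$\phi_{mk}(\vx):=Q_{m|k|}(\vx)\ee^{\I k\arg(\vx)}$ and $\chi_{mk}(s,\theta):=w(s)U_m(s)\ee^{\I k\arg(\theta)}$
form (after normalization) orthogonal systems in $L_2(\Omega)$ and in $L_2([-1,1]\times S^1;w^{-1})$ respectively, with $\R\phi_{mk}=\alpha_m\chi_{mk}$ for a constant $\alpha_m$ depending only on $m$. Since $\R_\Phi$ is just $\R$ followed by restriction to $C_\Phi$, the identity $\R_\Phi\phi_{mk}=\alpha_m\chi_{mk}|_{Z_\Phi}$ still holds; the Chebyshev-weighted $s$-orthogonality survives the angular restriction, but the angular orthogonality is destroyed.

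Next I would fix $m$ and work on the $(m+1)$-dimensional subspace
$V_m:=\spn\{\phi_{m,m-2l}:l=0,\ldots,m\}\subset L_2(\Omega)$.
Because the preserved $s$-orthogonality forces $\R_\Phi V_m\perp\R_\Phi V_{m'}$ in $L_2(Z_\Phi;w^{-1})$ for $m\neq m'$, the SVD problem decomposes blockwise across $m$, and it suffices to treat one block. Inside one block, $\R_\Phi^*\R_\Phi$ is represented (up to $\alpha_m^2$) by the Gram matrix of $\{\chi_{m,m-2l}\}_{l=0}^m$ in $L_2(Z_\Phi;w^{-1})$; a direct computation gives
\begin{align*}
\langle \chi_{mk_1},\chi_{mk_2}\rangle_{L_2(Z_\Phi;w^{-1})}
=\int_{-1}^{1}w(s)U_m(s)^2\,\diff s\cdot\int_{C_\Phi}\ee^{\I(k_1-k_2)\arg\theta}\,\diff\theta,
\end{align*}
which, after factoring out the Chebyshev normalization $\pi/2$ and rewriting with $k_j=m-2l_j$, is a scalar multiple of the matrix $A_m(\Phi)$ with the stated entries $a_l$.

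The SVD on $V_m$ thus reduces to the symmetric eigenvalue problem $A_m(\Phi)d_l=\lambda_l d_l$, $\norm[d_l]=1$: setting $v_{ml}(\vx):=\epsilon_l\sum_k(d_l)_k\phi_{m,m-2k}(\vx)$ yields an eigenfunction of $\R_\Phi^*\R_\Phi$ with eigenvalue proportional to $\alpha_m^2\lambda_l$, which, after tracking constants, gives $\sigma_{ml}^2=4\pi\lambda_l/(m+1)$. The data-side singular function $u_{ml}$ is then obtained from $\sigma_{ml}u_{ml}=\R_\Phi v_{ml}$ and normalization in $L_2(Z_\Phi;w^{-1})$; this is where the factor $1/\sqrt{\lambda_l}$ appears, and the phases $\epsilon_l$ are chosen to make $v_{ml}$ real-valued (or of a clean parity type) despite the complex exponentials in the expansion.

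The main obstacle I foresee is bookkeeping rather than any deep theoretical obstruction: reconciling the half-circle versus full-circle conventions for the angular integral, translating the parameter $\Phi$ between its ``missing-angle'' and ``half-aperture'' usages so that the entries of $A_m(\Phi)$ come out in the stated normalization, and keeping the parity constraint tying $k$ to $m$ consistent throughout. A secondary point is verifying completeness of $\bigoplus_m V_m$ in $L_2(\Omega)$ and orthogonality of the $\R_\Phi V_m$ blocks in $L_2(Z_\Phi;w^{-1})$; both follow from the completeness of the Zernike-type system on $\Omega$ and the intact Chebyshev orthogonality in $s$, but they warrant an explicit check to ensure no singular values outside the listed family are missed.
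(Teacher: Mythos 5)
Your outline is correct and is essentially the argument of Louis (1986), which is exactly the reference the paper defers to for this theorem (the paper itself gives no proof beyond the citation): fix the polynomial degree $m$, use the Cormack--Davison full-angle SVD so that restriction to $C_\Phi$ only destroys angular orthogonality, and diagonalize the resulting symmetric Toeplitz Gram matrix $A_m(\Phi)$ of the exponentials $\ee^{\I(m-2l)\arg\theta}$ block by block. The only point needing care is the one you already flag: the paper defines $C_\Phi$ via the \emph{missing} angle while the entries $a_l$ are written in Louis's half-aperture convention, so the constants must be tracked consistently to land on $\sigma_{ml}^2 = 4\pi\lambda_l/(m+1)$.
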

\begin{proof} See \cite{louis1986incomplete}. 
\end{proof}

{\color{black}
\begin{remark}
    Please note that the singular values and functions of $\R_\Phi$ also depend on the missing angular range denoted by $\Phi$. For the sake of readibility, the respective index is omitted in the singular system.
\end{remark}

    The singular functions are, in fact, real-valued, due to the symmetric Toeplitz structure of $A_m(\Phi)$ enforcing certain "symmetries" for the coefficients \cite{sleepian78,cantoni76}. Furthermore, the singular values decay exponentially as $m\to\infty$, revealing the severe ill-posedness of the limited angle problem. 

    \begin{remark}
    Working with the continuous system in practice remains quite challenging from a numerical point of view. 
     First, small errors from numerical computations break the mentioned "{}symmetries"{}, i.e. small imaginary parts remain, which in turn can cause inaccuracies in the real components. Furthermore, the fast decay of the eigenvalues of $A_m(\Phi)$, see \cite{Natterer86}, induces numerical instabilities which may alter not only the singular values but also their arrangement in descending order, leading to  difficulties in the multiplication with $\epsilon_l$. In addition, depending on the grid given by the detector, the system may also loose orthogonality. While the latter can be avoided using grids optimized for Chebyshev and trigonometric polynomials  (cf. \cite{etna_vol39_pp102-112}), the usual detector setups are build without such orthogonality problems in mind. 

Therefore, for numerical implementations, we make use of the singular system of the discretized forward operator, cf. Section \ref{sec:semi_discrete-AI}, rather than discretizing the system of the continuous case. 
    \end{remark}
 
}

In Figure \ref{fig:singular_and_kernel}, we illustrate the decay of the singular values as well as singular functions $v_{ml}$ to two singular values (one large and one small) for $\R_\Phi$ with $\Phi=30^\circ$ simulated on a $121\times 121$ grid. In Figure \ref{fig:singular_and_kernel} a), we clearly observe the exponential decay of the singular values. Comparing Figure \ref{fig:singular_and_kernel} c) and e) shows that the singular functions associated to the small singular values carry the missing directions, completing the missing cone in the Fourier domain which can be seen as analytic continuation. This is why standard regularizations such as truncated SVD, tend to deliver FBP-type reconstructions, since they cut off this information. Inspecting the singular functions in the frequency domain further shows that $v_{ml}$ associated with the smallest singular values carry information from high to low frequency ranges. Thus, $\R_\Phi$ scatters the high-frequency components of $f$ through the whole spectrum. This contrasts with the full data case, where $\R_\pi$ encodes the high-frequency components of $f$ through the singular functions associated with small singular values.

This observation will play an important role when we build our regularization. 
 
\begin{figure}[!h]
    \centering
    \begin{subfigure}{0.2\linewidth}
        \includegraphics[width=\linewidth]{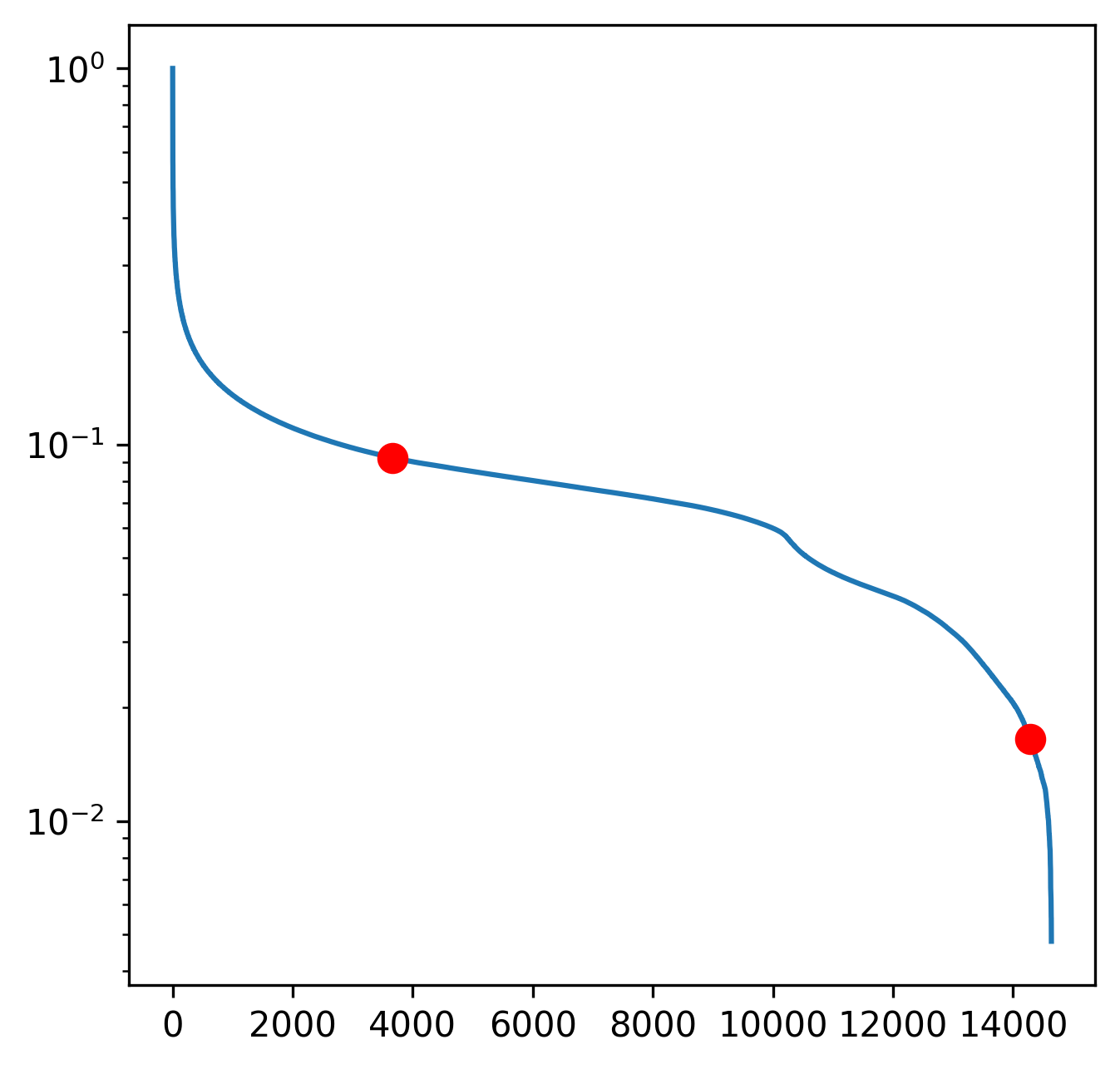}
        \caption{}
    \end{subfigure}
    \begin{subfigure}{0.19\linewidth}
        \includegraphics[width=\linewidth]{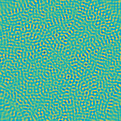}
        \caption{}
    \end{subfigure}
    \begin{subfigure}{0.19\linewidth}
    \begin{tikzpicture}[scale=0.55]    \node[inner sep=0pt] at(0,0) {\includegraphics[width=\linewidth]{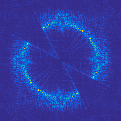}};
    \draw[dashed,red] (-2.5,1.5) -- (2.5,-1.5);
    \draw[dashed,red] (-1.5,2.5) -- (1.5,-2.5);
    \end{tikzpicture}
        \caption{}
    \end{subfigure}
    \begin{subfigure}{0.19\linewidth}
        \includegraphics[width=\linewidth]{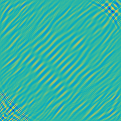}
        \caption{}
    \end{subfigure}
    \begin{subfigure}{0.19\linewidth}
    \begin{tikzpicture}[scale=0.55]    \node[inner sep=0pt] at(0,0) {\includegraphics[width=\linewidth]{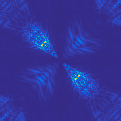}};
    \draw[dashed,red] (-2.5,1.5) -- (2.5,-1.5);
    \draw[dashed,red] (-1.5,2.5) -- (1.5,-2.5);
    \end{tikzpicture}
        \caption{}
    \end{subfigure}
    
    \caption{ Simulated on a 121x121 grid for $\Phi=30^\circ$. (a) Singular values of $\R_\Phi$ on a log-scale with two target singular values depicted as a red disk. (b-c) and resp. (d-e) Singular function $v_{ml}(\vx)$ and the magnitude of its 2D-Fourier transform for the first target singular value and the second one respectively. (c) and (e) highlight in red (dashed line) the missing cone.}     \label{fig:singular_and_kernel}
\end{figure}
 
\subsection{The method of the approximate inverse for $\R_\phi$} \label{sec:continuousLARK}
 
Our aim is to build a regularization for the limited-angle forward operator $\R_\Phi:L_2(\Omega)\to L_2(Z_\Phi;w^{-1})$, more precisely, to construct a family of continuous operators that approximates the generalized inverse, noted $\R_\Phi^\dag$, pointwise on $\mathrm{Ran}(\R_\Phi) \oplus \mathrm{Ran}(\R_\Phi)^\perp =: \mathcal{D}(\R_\Phi^\dagger)$, with $\mathrm{Ran}$ denoting the range of an operator.
 
The method of the approximate inverse \cite{louis96} proposes to build such a family by solving auxiliary problems $\R_\Phi^* \ai{\psi}{\vx}{\gamma} = \ai{e}{\vx}{\gamma}$ with a prescribed mollifier $\ai{e}{}{\gamma}$ and by evaluating linear functionals 
\begin{equation}\label{eq:def_S_gamma}
    S_\gamma g (\vx):= \langle g, \psi^\gamma_\vx\rangle.
\end{equation}
$S_\gamma$ is called the \emph{approximate inverse} of $\R_\Phi$ and $\psi^\gamma_\vx$ is called \emph{reconstruction kernel}. For the full-angle case, i.e. $\Phi=0$,  a closed-form representation of the reconstruction kernel $\ai{\psi}{\vx}{\gamma}$ can be obtained via the inversion formula of the Radon transform, see \cite{louis2008optimal,Schusterbook}. For $\Phi>0$, this is no longer feasible. However, the reconstruction kernel can at least be constructed via the singular value decomposition of $\R_\Phi$, see \cite{louis2006development}. \\

Before we address the representation of $\ai{\psi}{\vx}{\gamma}$, we provide a precise definition of a mollifier, following \cite{Schusterbook}.
 
\begin{definition}\label{def:mollifier}
    For all $\vx\in \Omega$ and for all $\gamma>0$, let $\ai{e}{\vx}{\gamma} \in L_2(\Omega)$ with 
    $$ \int_\Omega \ai{e}{\vx}{\gamma}(\vy)\,\mathrm{d}\vy = 1.$$
    Further let $$f^\gamma (\vx) := \int_\Omega f(\vy) \, \ai{e}{\vx}{\gamma}(\vy)\,\mathrm{d}\vy, \quad \vx \in \Omega$$
    converge to $f$ in $L_2(\Omega)$ as $\gamma \to 0$. Then, we call $\ai{e}{}{\gamma}$ a mollifier. 
\end{definition}

According to this definition, mollifers can be seen as (smooth) approximations to the delta distribution. 
For $g\in \mathrm{Ran}(\R_\Phi)$, i.e. $g=\R_\Phi f$ with $f\in L_2(\Omega)$, it holds
$$ S_\gamma g = \scalarproduct{\R_\Phi f}{\ai{\psi}{\vx}{\gamma}} = \scalarproduct{f}{\ai{e}{\vx}{\gamma}},$$
i.e. $S_\gamma g$ provides a smoothed version of $f$. 
Thus, by choosing the mollifer $e^\gamma_\cdot$, we guide the smoothness of our reconstruction. 
 
In order to regularize the inverse problem associated to $\R_\Phi$, we would like to start from a standard mollifier of convolution-type such as the widely-used Gaussian or a bump function. The issue here is that such a mollifier might not be in the range of $\R_\Phi^\star$ and therefore the auxiliary problem might not have a classic solution nor a minimum-norm solution. Nevertheless, one can approximate $\ai{e}{\vx}{\gamma}$ by a sequence of functions $(\ai{e}{\vx}{\gamma,n})_{n\in\N} \subset  \mathrm{Ran}(\R^*_\Phi)$.
\begin{lemma}\label{lemma:continuous}
Let $\ai{e}{\vx}{\gamma}$ be a $C^\infty$-smooth mollifier with compact support in $\Omega$. Then, there exists a sequence $(e^{\gamma,n}_{\vx})_{n\in\mathbb{N}} \subset \mathrm{Ran}(\R^*_\Phi)$ such that $e_{\vx}^{\gamma,n} \to e^{\gamma}_\vx$ for $n \to \infty$.
The reconstruction kernel $\psi_{\vx}^{\gamma,n}$ defined as the solution of $\R_\Phi^* \psi = e_{\vx}^{\gamma,n}$ for $n\in \mathbb{N}$ is then given by 
\begin{align}\label{eq:kernel_gamma_n}
\psi_{\vx}^{\gamma,n}= \sum_{m=0}^\infty\sum_{l=0}^{m} \frac{\scalarproduct{e_{\vx}^{\gamma,n}}{{v}_{ml}}}{{\sigma}_{ml}} u_{ml}.
\end{align}    
\end{lemma}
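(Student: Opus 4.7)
My plan is to construct $e^{\gamma,n}_\vx$ as a truncated expansion of the mollifier in the singular-function basis of Theorem~\ref{theo:SVD_R_Phi}, and then read off the reconstruction kernel by inverting $\R^*_\Phi$ on each basis element separately.

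\textbf{Step 1 — basis expansion.} I will first combine Theorems~\ref{theo:uniqueness} and \ref{theo:SVD_R_Phi} to obtain that $\R_\Phi$ is injective on $L_2(\Omega)$ and that $(v_{ml})_{m\in\mathbb{N}_0,\,0\le l\le m}$ forms a complete orthonormal basis of $L_2(\Omega)$. Since $e^\gamma_\vx$ is smooth with compact support in $\Omega$, it certainly lies in $L_2(\Omega)$, so Parseval gives the $L_2$-convergent expansion
$$e^\gamma_\vx = \sum_{m=0}^\infty \sum_{l=0}^m \langle e^\gamma_\vx, v_{ml}\rangle\, v_{ml}.$$
I then set
$$e^{\gamma,n}_\vx := \sum_{m=0}^n \sum_{l=0}^m \langle e^\gamma_\vx, v_{ml}\rangle\, v_{ml},$$
so that $e^{\gamma,n}_\vx \to e^\gamma_\vx$ in $L_2(\Omega)$ as $n \to \infty$. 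Because $\R^*_\Phi u_{ml} = \sigma_{ml} v_{ml}$ with $\sigma_{ml} > 0$, every singular function $v_{ml}$ lies in $\mathrm{Ran}(\R^*_\Phi)$ as the image of $u_{ml}/\sigma_{ml}$, so the finite linear combination $e^{\gamma,n}_\vx$ is itself in $\mathrm{Ran}(\R^*_\Phi)$ for every $n$.

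\textbf{Step 2 — explicit reconstruction kernel.} With $e^{\gamma,n}_\vx$ a finite linear combination of $v_{ml}$'s, the auxiliary equation $\R^*_\Phi \psi = e^{\gamma,n}_\vx$ is solved explicitly by
$$\psi^{\gamma,n}_\vx := \sum_{m=0}^n \sum_{l=0}^m \frac{\langle e^\gamma_\vx, v_{ml}\rangle}{\sigma_{ml}}\, u_{ml},$$
which can be checked by applying $\R^*_\Phi$ termwise and invoking $\R^*_\Phi u_{ml} = \sigma_{ml} v_{ml}$. To cast this in the stated form \eqref{eq:kernel_gamma_n}, I will use that by orthonormality $\langle e^{\gamma,n}_\vx, v_{ml}\rangle$ equals $\langle e^\gamma_\vx, v_{ml}\rangle$ for $m \le n$ and vanishes for $m > n$; so replacing the coefficients in the finite sum by $\langle e^{\gamma,n}_\vx, v_{ml}\rangle$ and extending the summation range to infinity does not alter the sum.

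\textbf{Expected main obstacle.} The delicate point is really the completeness of $(v_{ml})$ in $L_2(\Omega)$. Theorem~\ref{theo:uniqueness} only rules out smooth compactly supported elements of $\ker \R_\Phi$, so by itself it does not immediately give a complete SVD on $L_2(\Omega)$, and one must also verify that the indexing $m \in \mathbb{N}_0$, $0 \le l \le m$ exhausts the whole singular system. I will address this by relying on the explicit construction in \cite{louis1986incomplete}, where $(v_{ml})$ is exhibited as an orthonormal basis of $L_2(\Omega)$ by direct computation via Chebyshev and Jacobi polynomials. Once this completeness is in hand, all remaining steps reduce to termwise manipulations on finite sums and introduce no analytic difficulty.
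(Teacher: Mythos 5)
Your proposal is correct, but it follows a more constructive route than the paper. The paper argues abstractly: it deduces $\ker(\R_\Phi)=\{0\}$ from Theorem~\ref{theo:uniqueness}, concludes $\overline{\mathrm{Ran}(\R_\Phi^*)}=\ker(\R_\Phi)^\perp=L_2(\Omega)$, and obtains the \emph{existence} of some approximating sequence purely from the definition of the closure; the representation \eqref{eq:kernel_gamma_n} is then justified for an arbitrary such sequence by invoking the Picard criterion (convergence of the series is guaranteed because $e^{\gamma,n}_\vx\in\mathrm{Ran}(\R_\Phi^*)$) together with the SVD formula for the generalized inverse $(\R_\Phi^*)^\dagger$. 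You instead exhibit the specific sequence of truncated expansions in the $(v_{ml})$ basis — which is exactly the construction the paper offers as an example \emph{after} the lemma — and verify the kernel formula by termwise application of $\R_\Phi^*$ to a finite sum, so you never need the Picard criterion. Your version buys concreteness and makes \eqref{eq:kernel_gamma_n} a trivially convergent (indeed finite) sum; the paper's version buys generality, since its argument covers any sequence in $\mathrm{Ran}(\R_\Phi^*)$, not just the truncations. Both suffice for the existence statement actually claimed. Your flagged obstacle is also well placed: Theorem~\ref{theo:uniqueness} as stated only excludes smooth compactly supported kernel elements, and the passage to $\ker(\R_\Phi)=\{0\}$ on all of $L_2(\Omega)$ (equivalently, completeness of $(v_{ml})$) genuinely rests on the singular value decomposition of \cite{louis1986incomplete}, which the paper's own proof also implicitly uses; appealing to it directly, as you do, is if anything the cleaner bookkeeping.
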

\begin{proof}
Theorem \ref{theo:uniqueness} ensures that $\mathrm{ker}(\R_\Phi) = \{0\}$ with $\R_\Phi$ defined on $L_2(\Omega)$. This implies that $\overline{\mathrm{Ran}(R_\Phi^*)}=\mathrm{ker}(\R_\Phi)^\perp = L_2(\Omega)$. The existence of such a sequence $(e_{\vx}^{\gamma,n})_{n\in\mathbb{N}}$ follows from the definition of the closure.\\
Since $e_{\vx}^{\gamma,n} \in \mathrm{Ran}(\R^*_\Phi)$, $n\in\mathbb{N}$, it satisfies the Picard criterion which means that the series of $\sigma_{ml}^{-1} \scalarproduct{e_{\vx}^{\gamma,n}}{{v}_{ml}}$ converges. The representation of $\psi_{\vx}^{\gamma,n}$ follows now from the singular value decomposition of the generalized inverse 
$$
(\R_\Phi^*)^\dag h:= \sum_{m=0}^\infty\sum_{l=0}^{m} \frac{\scalarproduct{h}{{v}_{ml}}}{{\sigma}_{ml}} u_{ml}.
$$
\end{proof}
The index $n$ acts as an additional regularization parameter. For a given $\ai{e}{\vx}{\gamma} \in \mathcal{C}_0^\infty(\Omega)$, one can, for instance, consider the construction
$$
\ai{e}{\vx,n}{\gamma} := \sum_{m=0}^n \sum_{l=0}^{m} \distrib{\ai{e}{\vx}{\gamma}}{v_{ml}} v_{ml}.
$$
Please note that in this case, both sums in the representation \eqref{eq:kernel_gamma_n} of the corresponding kernel $\ai{\psi}{\vx}{\gamma,n}$ are finite as well.

An important issue with the limited-angle Radon transform is that the singular functions $v_{ml}$ contain the sum of oscillating functions at different frequencies, see \Cref{Sec:MathBackground}, i.e. low- and high-frequencies are scattered through the whole spectrum. This is counter-intuitive as often the smallest singular values tend to contain high-frequencies like in the case $\Phi=0$. In {\color{black} this case}, low-pass filters are efficient to dampen the smallest singular values. {\color{black} However, this is not true for $\Phi>0$.} This is why we add a \emph{spectral filter} to control the singular values. 

\begin{definition}\label{def:filter_spectrum}
Let $(F_\tau)_{\tau > 0}$ be a family of functions $F_\tau : (0,\infty) \to (0,\infty)$ such that 
\begin{enumerate}[label=(\roman*)]
    \item $\sup_{m,l} \vert F_\tau(\sigma_{ml}) \sigma_{ml}^{-1} \vert = c(\tau) < \infty$,
    \item $\lim_{\tau\to 0} F_\gamma(\sigma_{ml}) = 1$ for all $\sigma_{ml}$,
    \item $\vert F_\tau(\sigma_{ml})\vert $ is bounded for all $\gamma$ and $\sigma_{ml}$
\end{enumerate}
with $\sigma_{ml}$ the singular values of $\R_\Phi$, cf. \Cref{theo:SVD_R_Phi}. Then, $F_\tau$ is called \emph{spectral filter} for $\R_\Phi$.
\end{definition}
Examples include the classical Tikhonov-filter $F_\tau(\sigma)= \frac{\sigma^2}{\sigma^2 + \tau}$ or $F_\tau(\sigma) = \frac{\sigma}{\tau} \arctan \left( \frac{\tau}{\sigma} \right)$. 
With such a spectral filter, we now consider the following variant of the approximate inverse. 
 
\begin{theorem}\label{lemma:continuous_LARK}
    Let $\ai{e}{}{\gamma}$ be a  prescribed mollifier and let $(F_\tau)_{\tau>0}$ be a family of spectral filters for $\R_\Phi$.     Then, the operator $\S_{\gamma,\tau,n}: L_2(Z_\Phi;w^{-1}) \to L_2(\Omega)$ with $\S_{\gamma,\tau,n} g (\vx) := \langle g, \ai{\psi}{\vx}{\gamma,\tau,n}\rangle$ and kernel
    \begin{equation} \label{eq:continuous_LARK}
\psi_{\vx}^{\gamma,\tau,n} := \sum_{m=0}^n \sum_{l=0}^{m} \frac{F_\tau(\sigma_{ml})}{{\sigma}_{ml}} \scalarproduct{\ai{e}{\vx}{\gamma}}{v_{ml}} u_{ml} 
    \end{equation}
    is a regularization method for $\R_\Phi$. 
\end{theorem}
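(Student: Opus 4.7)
The plan is to verify the two defining properties of a regularization scheme: first, that each $\S_{\gamma,\tau,n}$ is a continuous linear operator from $L_2(Z_\Phi;w^{-1})$ into $L_2(\Omega)$, and second, that $\S_{\gamma,\tau,n} g$ converges to the minimum-norm solution $\R_\Phi^\dagger g$ for every $g\in\mathcal{D}(\R_\Phi^\dagger)$ as $(\gamma,\tau,n)\to (0,0,\infty)$ along a suitable joint parameter choice rule.

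First I would substitute the definition \eqref{eq:continuous_LARK} into $\S_{\gamma,\tau,n}g(\vx)=\scalarproduct{g}{\psi_\vx^{\gamma,\tau,n}}$ and exploit the orthonormality of the $u_{ml}$ in $L_2(Z_\Phi;w^{-1})$ to obtain the spectral representation
$$
\S_{\gamma,\tau,n}g(\vx)=\sum_{m=0}^{n}\sum_{l=0}^{m}\frac{F_\tau(\sigma_{ml})}{\sigma_{ml}}\,\scalarproduct{\ai{e}{\vx}{\gamma}}{v_{ml}}\,\scalarproduct{g}{u_{ml}}.
$$
Continuity then follows directly from property (i) of \Cref{def:filter_spectrum}: the bound $|F_\tau(\sigma_{ml})/\sigma_{ml}|\le c(\tau)$, combined with Cauchy--Schwarz and Bessel's inequality over the finite index set $\{0,\dots,n\}\times\{0,\dots,m\}$, yields an estimate of the form $\|\S_{\gamma,\tau,n}g\|_{L_2(\Omega)}\le C(\gamma,\tau,n)\|g\|_{L_2(Z_\Phi;w^{-1})}$, with $C(\gamma,\tau,n)$ depending on $c(\tau)$, $n$, and the $L_2$-norm of $\vx\mapsto \ai{e}{\vx}{\gamma}$.

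For the convergence part I would fix $g\in\mathcal{D}(\R_\Phi^\dagger)$ and set $f^+:=\R_\Phi^\dagger g$. The SVD characterization of the minimum-norm solution gives $\scalarproduct{g}{u_{ml}}=\sigma_{ml}\scalarproduct{f^+}{v_{ml}}$, which simplifies the spectral representation to
$$
\S_{\gamma,\tau,n}g(\vx)=\sum_{m=0}^{n}\sum_{l=0}^{m}F_\tau(\sigma_{ml})\,\scalarproduct{\ai{e}{\vx}{\gamma}}{v_{ml}}\,\scalarproduct{f^+}{v_{ml}}.
$$
By \Cref{theo:uniqueness} the family $(v_{ml})$ is complete in $L_2(\Omega)$; combined with properties (ii)--(iii) of the spectral filter and Parseval's identity, dominated convergence then yields $\S_{\gamma,\tau,n}g(\vx)\to \scalarproduct{f^+}{\ai{e}{\vx}{\gamma}}=(f^+)^\gamma(\vx)$ as $n\to\infty$ and $\tau\to 0$. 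Finally, the mollifier property of \Cref{def:mollifier} delivers $(f^+)^\gamma \to f^+=\R_\Phi^\dagger g$ in $L_2(\Omega)$ as $\gamma\to 0$. A standard diagonal argument then combines the three convergences along a joint parameter rule into the regularization statement.

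The main obstacle I anticipate is controlling the order of the three limits so that the individual convergences chain together into a single $L_2(\Omega)$-convergence. Pointwise convergence of the partial sums at each $\vx$ follows immediately from square-summability of $(\scalarproduct{f^+}{v_{ml}})$ and the uniform bound on $F_\tau$, but the regularization statement requires uniform remainder estimates on compact subsets of $\Omega$. This is precisely where the compact support and $C^\infty$-smoothness of $\ai{e}{\vx}{\gamma}$ (as already exploited in \Cref{lemma:continuous}) provide the integrable majorants needed to close the argument and to produce a concrete joint rule $\tau=\tau(n,\gamma)$ coupling truncation, spectral damping and mollification.
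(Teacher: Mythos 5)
Your proposal is correct and follows essentially the same route as the paper: rewrite $\S_{\gamma,\tau,n}g$ in the spectral representation, send $n\to\infty$ and $\tau\to 0$ to recover the (filtered) minimum-norm solution, and then use the mollifier property for $\gamma\to 0$, chained via the triangle inequality. The only cosmetic difference is that the paper identifies the inner sum as a spectral-filtering regularization $\T_\tau$ and cites the standard result that such filters regularize $\R_\Phi^\dagger$, whereas you re-derive that step from the SVD characterization $\scalarproduct{g}{u_{ml}}=\sigma_{ml}\scalarproduct{\R_\Phi^\dagger g}{v_{ml}}$ and dominated convergence.
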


\emph{Remark:} We refer to $\psi_{\vx}^{\gamma,\tau,n}$ in \eqref{eq:continuous_LARK} as \emph{limited-angle reconstruction kernel (LARK)}.

\begin{proof}
We show that $\S_{\gamma,\tau,n} \to \R_\Phi^\dagger$ pointwise on $\mathcal{D}(\R_\Phi^\dagger)$ as $(\gamma,\tau,n) \to 0$. Then,  $(\S_{\gamma,\tau,n})_{\gamma,\tau,n> 0}$ is a regularization for $\R_\Phi^\dagger$ \cite{englbook}.
With the representation of $\ai{\psi}{\vx}{\gamma,\tau,n}$, it holds
$$\S_{\gamma,\tau,n}g (\vx) = \langle g, \ai{\psi}{\vx}{\gamma,\tau,n}\rangle = \left\langle \ai{e}{\vx}{\gamma}, \sum_{m=0}^n \sum_{l=0}^m F_\tau (\sigma_{ml}) \, \sigma_{ml}^{-1}\, \langle g,u_{ml}\rangle \, v_{ml}\right\rangle.$$
With $n\to \infty$, the second argument of the scalar product converges to 
$$\sum_{m=0}^\infty \sum_{l=0}^m F_\tau(\sigma_{ml}) \, \sigma_{ml}^{-1} \, \scalarproduct{g}{u_{ml}} \, v_{ml} =: \T_\tau g,$$
i.e. to a spectral filtering operator $\T_\tau$ applied to $g$. Since $(F_\tau)_{\tau>0}$ satisfies the conditions of definition \ref{def:filter_spectrum}, $(\T_\tau)_{\tau>0}$ is a regularization for $\R_\Phi^\dagger$, see \cite{louisbook}, i.e. in particular $\T_\tau g \rightarrow \R_\Phi^\dagger g$ for $g\in \mathcal{D}(\R_\Phi^\dagger)$ as $\tau\to 0$. 
Further, since $\ai{e}{}{\gamma}$ is a mollifier, it holds for any $h\in L_2(\Omega)$
$$\lim_{\gamma\to 0} \|\langle e^\gamma_\cdot, h \rangle - h\|_{L_2} = 0.$$
Together, we obtain for $g\in \mathcal{D}(\R_\Phi^\dagger)$
\begin{equation*}
    \underset{(\gamma,\tau,n) \to 0}{\lim} \| \S_{\gamma,\tau,n} g - \R_\Phi^\dagger g\| 
    \leq 
    \underset{(\gamma,\tau,n) \to 0}{\lim} \left(\| \S_{\gamma,\tau,n} g -\T_\tau g\| + \| \T_\tau g - \R_\Phi^\dagger g\|  \right) = 0. 
\end{equation*}
 
\end{proof}

\begin{remark}
    A straight-forward computation shows that LARK $\ai{\psi}{\vx}{\gamma,\tau,n}$ corresponds to the reconstruction kernel associated to the modified mollifier 
    \begin{equation}\label{eq:mollifier_gamma_tau}
    \ai{e}{\vx}{\gamma,\tau,n}(\vy) := \sum_{m=0}^n \sum_{l=0}^{m} F_\tau(\sigma_{ml}) \distrib{\ai{e}{\vx}{\gamma}}{v_{ml}} v_{ml}(\vy),
    \end{equation}
    \emph{i.e.} LARK corresponds to the solution of the auxiliary problem $\R_\Phi^* \psi = \ai{e}{\vx}{\gamma,\tau,n}$. Eq. \eqref{eq:mollifier_gamma_tau} can be seen as a model-driven construction. This construction might seem unnecessary at first glance since the filter $F_\tau$ suffices to define a regularization method for $\R_\Phi$. The problem is that the singular functions contain, as mentioned above and as seen in \Cref{theo:SVD_R_Phi}, a large range of frequencies. Thus, the regularization $\T_\tau$ does not control the smoothness of the solution but only the increase of $\sigma_{ml}^{-1}$ and therefore the norm of the regularization operator. The goal is not to cut-off the smallest $\sigma_{ml}$ as they embed the missing regions but to attenuate them. In order to control the higher frequencies in the solution -- particularly important with noisy data -- we therefore need to include the mollifier $e^\gamma$.     Finally, the truncation enforced by the parameter $n$ ensures the range condition. 
    Since $\ai{\psi}{\vx}{\gamma,\tau,n}$ corresponds to the solution of $\R_\Phi^* \psi = \ai{e}{\vx}{\gamma,\tau,n}$ with adapted mollifer, we simplify the notation - whenever unambiguous - by summarizing the regularization parameters $(\gamma,\tau,n)$ in a single parameter, denoted by $\gamma$.
    
\end{remark}
 
An advantage of the method of the approximate inverse, especially in tomography, is that invariance properties of the operator can significantly reduce the computation costs for determining reconstruction kernels for different reconstruction points $\vx$. As in the full-angle case, the limited-angle Radon transform satisfies
$
\mathcal{T}_1^\vx \R_\Phi^\star = \R_\Phi^\star \mathcal{T}_2^\vx
$
with $\mathcal{T}_1^\vx: f(\cdot)\mapsto f(\cdot-\vx)$ and $\mathcal{T}_2^\vx: g(s,\theta)\mapsto g(s-\vx^\top \theta,\theta)$. This property of $\R_\Phi$ can simplify the construction of the reconstruction kernel in the following way.
\begin{lemma} Let $e^\gamma$ be a mollifier of convolution type, \textit{i.e.} $\ai{e}{\vx}{\gamma}(\vy) = T_1^\vx \ai{e}{\mathbf{0}}{\gamma}(\vy)$. 
Then, $$
\ai{\psi}{\vx}{\gamma} =  \sum_{m=0}^{\infty}\sum_{l=0}^{m} \frac{\scalarproduct{\ai{e}{\mathbf{0}}{\gamma}}{{v}_{ml}}}{{\sigma}_{ml}} \left(T_2^\vx u_{ml}\right) .
$$
\end{lemma}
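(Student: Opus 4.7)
The plan is to exploit the intertwining relation $\mathcal{T}_1^\vx \R_\Phi^\star = \R_\Phi^\star \mathcal{T}_2^\vx$ to reduce the auxiliary problem at a general reconstruction point $\vx$ to the auxiliary problem at the origin $\mathbf{0}$, and then to expand the solution at $\mathbf{0}$ using the singular value decomposition from Theorem~\ref{theo:SVD_R_Phi}.

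First, I would solve the auxiliary problem at the origin. Applying (a sufficiently regularized version of) Lemma~\ref{lemma:continuous} to the mollifier $\ai{e}{\mathbf{0}}{\gamma}$ yields a reconstruction kernel $\ai{\psi}{\mathbf{0}}{\gamma}$ which, by the singular value decomposition of $(\R_\Phi^\star)^\dagger$, admits the series representation
$$
\ai{\psi}{\mathbf{0}}{\gamma} \;=\; \sum_{m=0}^\infty \sum_{l=0}^{m} \frac{\scalarproduct{\ai{e}{\mathbf{0}}{\gamma}}{v_{ml}}}{\sigma_{ml}}\, u_{ml},
$$
and by construction it satisfies $\R_\Phi^\star \ai{\psi}{\mathbf{0}}{\gamma} = \ai{e}{\mathbf{0}}{\gamma}$ (with convergence understood in the sense explained after Lemma~\ref{lemma:continuous}).

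Second, I would produce a candidate kernel at $\vx$ by applying the sinogram shift $\mathcal{T}_2^\vx$ termwise to the expansion above, giving
$$
\mathcal{T}_2^\vx \ai{\psi}{\mathbf{0}}{\gamma} \;=\; \sum_{m=0}^{\infty}\sum_{l=0}^{m} \frac{\scalarproduct{\ai{e}{\mathbf{0}}{\gamma}}{v_{ml}}}{\sigma_{ml}}\, \bigl(\mathcal{T}_2^\vx u_{ml}\bigr).
$$
To identify this candidate with $\ai{\psi}{\vx}{\gamma}$, I would verify that it satisfies the auxiliary problem at $\vx$ by a single use of the intertwining relation:
$$
\R_\Phi^\star\bigl(\mathcal{T}_2^\vx \ai{\psi}{\mathbf{0}}{\gamma}\bigr) \;=\; \mathcal{T}_1^\vx \R_\Phi^\star \ai{\psi}{\mathbf{0}}{\gamma} \;=\; \mathcal{T}_1^\vx \ai{e}{\mathbf{0}}{\gamma} \;=\; \ai{e}{\vx}{\gamma},
$$
where the last equality is exactly the convolution-type assumption on the mollifier. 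Since $\R_\Phi$ has trivial kernel by Theorem~\ref{theo:uniqueness}, this solution is unique, so $\ai{\psi}{\vx}{\gamma} = \mathcal{T}_2^\vx \ai{\psi}{\mathbf{0}}{\gamma}$ and the stated identity follows.

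The main obstacle I foresee is technical rather than conceptual: making the series manipulations rigorous in the presence of the severely ill-posed factors $\sigma_{ml}^{-1}$, and handling the fact that $\mathcal{T}_2^\vx$ does not act as an isometry on $L_2(Z_\Phi;w^{-1})$ because shifting in $s$ interacts with the weight $w$ and, when applied to data truncated to $Z_\Phi$, with the $s$-support. I would address this either by invoking the truncated/regularized kernel $\ai{\psi}{\mathbf{0}}{\gamma,\tau,n}$ from Theorem~\ref{lemma:continuous_LARK} (where both sums are finite and termwise shifting is unproblematic) and passing to the limit, or by noting that the identity is really needed only on the support of $\ai{e}{\vx}{\gamma}$, which sits in $\Omega$, so that the shifted kernel is used only through the linear functional $\scalarproduct{g}{\ai{\psi}{\vx}{\gamma}}$ and the invariance $\mathcal{T}_1^\vx \R_\Phi^\star = \R_\Phi^\star \mathcal{T}_2^\vx$ can be verified directly from the change-of-variables definition of $\R_\Phi^\star$.
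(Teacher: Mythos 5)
Your argument follows the same route the paper intends --- its own proof is a one-line appeal to the invariance property $\mathcal{T}_1^\vx \R_\Phi^\star = \R_\Phi^\star \mathcal{T}_2^\vx$ --- and you fill in the details in essentially the right way, including an honest discussion of the support/weight subtleties for $\mathcal{T}_2^\vx$ on $L_2(Z_\Phi;w^{-1})$. The one step that is not quite right is the uniqueness claim: the trivial kernel of $\R_\Phi$ (\Cref{theo:uniqueness}) gives injectivity of $\R_\Phi$ and hence density of $\mathrm{Ran}(\R_\Phi^*)$ in $L_2(\Omega)$; it says nothing about injectivity of $\R_\Phi^*$, which is what you would need for the auxiliary problem $\R_\Phi^*\psi = \ai{e}{\vx}{\gamma}$ to have a \emph{unique} solution. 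For the (limited-angle) Radon transform the range of $\R_\Phi$ is not dense in $L_2(Z_\Phi;w^{-1})$ (consistency conditions), so $\ker(\R_\Phi^*) = \mathrm{Ran}(\R_\Phi)^\perp \neq \{0\}$ and your candidate $\mathcal{T}_2^\vx \ai{\psi}{\mathbf{0}}{\gamma}$ is only guaranteed to be \emph{a} solution. To identify it with the stated series --- which is the minimum-norm solution, living in $\overline{\spn\{u_{ml}\}} = \ker(\R_\Phi^*)^\perp$ --- you should additionally invoke the companion relation $\mathcal{T}_2^\vx \R_\Phi = \R_\Phi \mathcal{T}_1^\vx$ to see that $\mathcal{T}_2^\vx$ maps $\overline{\mathrm{Ran}(\R_\Phi)}$ into itself (modulo the support caveats you already raise), since the two expansions $\sum_{m,l}\sigma_{ml}^{-1}\scalarproduct{\ai{e}{\vx}{\gamma}}{v_{ml}}\,u_{ml}$ and $\sum_{m,l}\sigma_{ml}^{-1}\scalarproduct{\ai{e}{\mathbf{0}}{\gamma}}{v_{ml}}\,\mathcal{T}_2^\vx u_{ml}$ are not termwise identical. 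This is a repairable slip rather than a fatal gap, and your write-up is in any case more detailed than the paper's proof, which supplies no argument beyond the citation.
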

\begin{proof}
    This follows directly from the invariance properties of $\R_\Phi$, see also \cite{louis2006development}.
\end{proof}
However, this property has to be treated with caution in real applications. Replacing the direct computation of $\ai{\psi}{\vx}{\gamma,\tau,n}$ by evaluating $T_2^\vx \ai{\psi}{\mathbf{0}}{\gamma,\tau,n}$ with $\ai{\psi}{\mathbf{0}}{\gamma,\tau,n}$ precomputed on a certain grid will require interpolation steps. These imply errors which are harmful for such an exponentially ill-posed problem. Thus, exploiting the invariances would come at the cost of an over-stabilization of the reconstruction process which we choose to avoid.

{\color{black} \begin{remark}
    Our approach can be extended in a straight-forward manner to enable the \emph{feature reconstruction} by replacing $e^\gamma_\vx$ with $\mathcal{L}^* e^\gamma_\vx$, where $\mathcal{L}$ is the desired feature operator, see \cite{louis2011feature}.  
\end{remark}}
 
\subsection{Handling data errors under severe ill-posedness}\label{sec:denoising}

When dealing with real data, one has to face errors in the measurement and, more generally, imperfections in the forward model. The discrepancy between such data $g^\delta$ and $g = \R_\Phi f$ will impact the whole frequency range and, particularly high frequencies. Given the fast decay of the singular values and the frequency richness of the singular functions, the reconstruction $\S_\gamma g^\delta$ is expected to produce large artefacts arising from the singular functions corresponding to the smallest singular values like shown in \Cref{fig:singular_and_kernel}(d). In order to handle this challenge, one has to further constrain the solution space. A typical approach in imaging applications is to impose a small total variation (noted $\mathrm{TV}(\cdot)$) on the solution, resulting in solving
\begin{equation}\label{eq:TVstandard}
\min\limits_f \ \frac12 \|\R_\Phi f - g^\delta\|_{L_2}^2  +\lambda \, \mathrm{TV}(f), \quad \lambda>0
\end{equation}
which tends to preserve edges but remains limited for large $\Phi$.

This formulation, however, does not allow to use the reconstruction kernel built by the approximate inverse and its benefits. As we will see in the later sections, the reconstruction kernel changes the nature of the artefacts in the reconstructed image - instead of the characteristic streak artefacts and missing features, the singular functions to extremely small singular values degrade the image. In order to combine the approximate inverse with constraints on the solution space, we propose to define the following solver
\begin{equation}\label{eq:reg-AI-discrete}
\D_\lambda (g^\delta) \in \argmin\limits_g \frac12 \|g - g^\delta\|^2 +\lambda \, \P\circ\S_\gamma(g), \end{equation}
with $\S_\gamma$ as defined in \eqref{eq:def_S_gamma} and $\P$ a suited penalty functional compensating for the smoothing of the missing regions. Potential choices for the regularizer $\P$ include classical total-variation or more general non-linear isotropic diffusion functionals \cite{hahn24nid} which have the advantage of allowing edge sharpening. 

The operator $\mathcal{D}_\lambda$ can be seen as a pre-processing denoising step on the data $g^\delta$ enforcing data fidelity and edge-preservation in the final reconstruction $f_\gamma^\lambda:=\S_\gamma(\D_\lambda (g^\delta))$. We refer to $f_\gamma^\lambda$ as solution via the \textit{constrained limited-angle reconstruction kernel} (CLARK).\\
 
The following theorem shows that the existence and uniqueness of a minimizer $D_\lambda(g^\delta)$ can be ensured by the choice of the penalty term $\mathcal{P}$.

\begin{theorem}\label{thm:existence-clark}
    Let $\mathcal{P}\colon L_2(\Omega)\to [0,\infty]$ be weakly sequentially lower semi-continuous s.t. there exists $f\in \range(S_\gamma) $ with $\mathcal{P}(f)< \infty$.
    Then, a minimizer $\D_\lambda(g^\delta)$ as in equation \eqref{eq:reg-AI-discrete} exists. If, in addition, $\mathcal{P}$ is convex, then this minimizer is unique.
\end{theorem}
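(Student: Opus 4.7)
The plan is to apply the direct method of the calculus of variations to the functional
\begin{equation*}
J(g) := \tfrac{1}{2}\|g - g^\delta\|^2 + \lambda\,\mathcal{P}(S_\gamma g), \qquad g \in L_2(Z_\Phi;w^{-1}).
\end{equation*}
The first step is to check that $J$ is proper: by assumption there exists $f_0 \in \range(S_\gamma)$ with $\mathcal{P}(f_0)<\infty$, so any preimage $g_0$ with $S_\gamma g_0 = f_0$ yields $J(g_0)<\infty$. Since both summands of $J$ are non-negative, $J$ is bounded below and $m := \inf J \in [0,\infty)$.

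Next I take a minimizing sequence $(g_k)_{k\in\mathbb{N}}$ with $J(g_k)\to m$. The data-fidelity term gives $\tfrac12\|g_k - g^\delta\|^2 \le J(g_k) \le m+1$ for $k$ large, hence $(g_k)$ is bounded in the Hilbert space $L_2(Z_\Phi;w^{-1})$. By Banach–Alaoglu / reflexivity I extract a subsequence (not relabelled) with $g_k \rightharpoonup g^\star$ weakly. The squared norm $\tfrac12\|\cdot - g^\delta\|^2$ is continuous and convex, hence weakly sequentially lower semi-continuous. For the second term, I use that $S_\gamma$ is a \emph{bounded linear} operator: from \Cref{lemma:continuous_LARK} the truncated kernel $\psi^{\gamma,\tau,n}_\vx$ lies in a finite-dimensional span of singular functions with coefficients depending continuously on $\vx$, so $\vx\mapsto\|\psi^{\gamma,\tau,n}_\vx\|$ is essentially bounded on the compact $\Omega$ and $S_\gamma\colon L_2(Z_\Phi;w^{-1})\to L_2(\Omega)$ is bounded. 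Boundedness implies weak-to-weak continuity, so $S_\gamma g_k \rightharpoonup S_\gamma g^\star$. Combining with the weak sequential lower semi-continuity of $\mathcal{P}$ gives
\begin{equation*}
J(g^\star) \le \liminf_{k\to\infty} J(g_k) = m,
\end{equation*}
so $g^\star$ is a minimizer of $J$, establishing existence.

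For uniqueness under convexity of $\mathcal{P}$, observe that $\mathcal{P}\circ S_\gamma$ is convex as the composition of the convex functional $\mathcal{P}$ with the linear map $S_\gamma$. The fidelity term $\tfrac12\|\cdot - g^\delta\|^2$ is strictly convex. Consequently $J$ is strictly convex on the convex set $\{J<\infty\}$, and a strictly convex functional admits at most one minimizer. Together with existence from the preceding step this yields uniqueness.

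The main obstacle is the verification of the weak continuity of $S_\gamma$; everything else is a standard direct-method argument. Here it is handled cleanly because the truncation index $n$ in the LARK kernel of \eqref{eq:continuous_LARK} makes $\psi^{\gamma,\tau,n}_\vx$ a finite sum of singular functions and therefore produces a bounded linear operator on $L_2$. Without this truncation an additional argument (e.g.\ a Picard-type criterion together with a uniform bound on $\|\psi^\gamma_\vx\|$) would be required.
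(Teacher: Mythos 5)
Your proposal is correct and follows essentially the same route as the paper: both arguments rest on the boundedness and linearity of $S_\gamma$ (hence weak-to-weak continuity and weak sequential lower semi-continuity of $\mathcal{P}\circ S_\gamma$), coercivity of the level sets via the data-fidelity term together with reflexivity of $L_2$, and strict convexity of the fidelity term for uniqueness. The only difference is presentational — the paper delegates the direct method to cited standard theorems, while you unfold it explicitly and additionally justify the boundedness of $S_\gamma$ via the finite truncation of the kernel, which the paper simply takes from the construction.
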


\begin{proof} According to its construction, $S_\gamma$ is a linear and bounded operator, cf. \cite{Schusterbook}. Thus, under the given conditions on $\mathcal{P}$, it holds that
    \begin{itemize}
        \item $\mathcal{P}\circ S_\gamma$ is weakly sequentially lower semi-continuous
        \item there exists $g$ s.t. $\mathcal{P}(S_\gamma(g))<\infty$
        \item $\mathcal{P}\circ S_\gamma$ is convex whenever $\mathcal{P}$ is convex.
    \end{itemize}
    Furthermore, the level sets $\{g\in L_2\colon \mathcal{Q}(g)\leq t\}$ of the functional $       \mathcal{Q}(g):=\frac{1}{2}\norm[g-g^\delta]^2+\lambda\cdot\mathcal{P}\circ S_\gamma(g)$
                        are bounded due to the first term and $\mathcal{P}\geq 0$. Hence, with $L_2$ being a reflexive space, said level sets are even weakly sequentially compact.  
    The existence and uniqueness of solutions now follow from standard arguments in (convex) variational optimization, see e.g. \cite{scherzer00}, Theorem 3.22 or \cite{bredies11}, Theorem 6.31.
\end{proof}

Please note that the asserted conditions are, e.g., satisfied by choosing $\mathcal{P}$ as total variation or as the functionals from \cite{hahn24nid}.\\
 
\section{Towards the use on real data}\label{sec:semi_discrete-AI}

When dealing with real data, the model needs to be adapted 
to the measurement sampling. We denote by $\Xi_m: L_2(Z_\Phi;w^{-1}) \to \RR^m$ an \textit{observation operator} such that real data $\gvec \in \RR^m$ satisfy the semi-discrete forward model
\begin{equation} \label{eq:semidiscreteFM}
\gvec = \Xi_m \R_\Phi f \qquad \text{for } f\in L_2(\Omega).
\end{equation}

Due to the numerical challenges when working with the SVD of the limited-angle Radon transform, see Section \ref{Sec:MathBackground}, and since the semi-discrete operator is unbounded \cite{rieder00}, we propose, in an unconventional way, to instead solve the inverse problem directly in its discrete form, which, in turn however, comes with its own challenges. \\ 

If the searched-for function $f$ stems from interpolation of a discrete version $\fvec \in \RR^n$, i.e. $f=\Pi_n \fvec$ with \textit{interpolation operator} $\Pi_n: \RR^n \to L_2(\Omega)$, then the forward model in \eqref{eq:semidiscreteFM} further reduces to a simple \textit{projection matrix} under the form
\begin{equation}\label{eq:def_A_Phi}
A_{\Phi,\Xi_m,\Pi_n} \fvec := \Xi_m \R_\Phi \Pi_n \fvec, \qquad \text{with } A_{\Phi,\Xi_m,\Pi_n} \in\RR^{m\times n}.
\end{equation}
In the following, we consider the straightforward construction of the projection matrix $A_{\Phi,\Xi_m,\Pi_n}$ with the standard parallel beam data sampling and object sampling in a given basis, i.e.
\begin{itemize}
    \item $\Xi_m$ samples a continuous function $g(s,\theta)$ in a measurement grid $((s,\theta)_j)_{j=1,\ldots,m}$, which is uniform with respect to $s$ and $\theta$, respectively,            \item $\Pi_n$ stands for the interpolation over a grid with centers $(\vx_i)_{i = (1,\ldots,n)}\subset \RR^2$ and basis function $\phi \in L_2(\Omega)$, i.e. for a given vector $\fvec\in\mathbb{R}^n$, it holds 
\begin{equation} \label{eq:interpol_op} \Pi_n \fvec = \sum_{i=1}^n {\fvec}_i \, \phi (\cdot - \vx_i).\end{equation}
If $f$ and $\phi$ can be evaluated pointwise, the coefficient vector $\fvec$ is determined such that $f(\vx_j)=\Pi_n \fvec (\vx_j)$ for all $j=1,\dots,n$. 
\end{itemize}
Thus, the entries of the matrix $A_{\Phi,\Xi_m,\Pi_n}$ are given by
$$
(A_{\Phi,\Xi_m,\Pi_n})_{ji} = \int_\Omega \phi(\vx-\vx_i) \delta(s_j - \vx \cdot \theta_j) \mathrm{d}\vx.
$$
When there is no ambiguity and for the sake of readibility, we shall replace $A_{\Phi,\Xi_m,\Pi_n}$ by $A_\Phi$. We will speak of the discrete case when applying the discrete LARK/CLARK on discrete data and of the semi-discrete case when on real/analytic data.

\subsection{Computation of the discrete reconstruction kernel} 

For the reconstruction step, we choose a mollifier $e^\gamma$ as well as discrete reconstruction points $(\vz_l)_{l=1,\dots,r}\subset \mathbb{R}^2$. 
These can potentially differ from the centers $(\vx_i)_{i=1,\dots,n}$ used to build the operator $\Pi_n$. This gives us the flexibility to use two different representations for $f\in L_2(\Omega)$ -- one for constructing the projection matrix $A_\Phi$ and another one for the actual reconstruction step. For the latter, we ideally want to work with the classical pixel grid since this facilitates for instance the computation of gradients and allows a direct combination with our additional constraint on the solution space, cf. Section \ref{sec:denoising}. However, this might not sufficiently well approximate $f$ in order to explain the semi-discrete data $\gvec$. Using a different grid for representing the discrete projection matrix will enable us to reduce the overall reconstruction error, introduced when applying the discrete reconstruction kernel to semi-discrete data $\gvec$ later. \\
 
Once the reconstruction points $(\vz_l)_{l=1,\dots,r}$ are fixed, the goal is to recover from $\gvec$ a vector $\fvec^\gamma \in \mathbb{R}^r$ approximating $( \langle f, e^\gamma_{\vz_1}\rangle, \dots, \langle f,e^\gamma_{\vz_n}\rangle)^\top =: \fvec_{\vz}^\gamma \in\mathbb{R}^r$. To this end, we define 
$$
\fvec^\gamma := \Big( \langle \Pi_n \fvec, e^\gamma_{\vz_j} \rangle_{L_2} \Big)_{j=1,\dots,r}.
$$
With the representation of $\Pi_n$ in \eqref{eq:interpol_op}, this corresponds to $\fvec^\gamma = E^\gamma \fvec$ where
\begin{equation*}
E^\gamma \in \mathbb{R}^{r \times n} \ \text{ with } \ (E^\gamma)_{ji} := \int_\Omega \phi(\vx-\vx_i)\, e^\gamma_{\vz_j} (\vx) \,\mathrm{d}\vx. \label{eq:semi_disc-moll}
\end{equation*}
We now propose to compute the reconstruction kernel with respect to the discrete projection matrix $A_{\Phi}$ by solving the auxiliary problem 
\begin{equation} \label{eq:auxprobldiscrete} 
A_{\Phi}^\top \Psi^\gamma = (E^\gamma)^\top. 
\end{equation}

Following our analysis in Section \Cref{sec:continuousLARK} for the continuous case, we compute $\Psi^\gamma$ via the singular value decomposition 
$(\sigma_i, \mathrm{u}_i,\mathrm{v}_i)_{i=1,\ldots,n}$ of the matrix $A_\Phi^\top \in \RR^{n  \times m}$, $m \geq n$, i.e. $A_\Phi^\top \mathrm{u}_i = \sigma_i \mathrm{v}_i$ with $\mathrm{u}_i \in \RR^m$ and $\mathrm{v}_i \in \RR^n$. In matrix form, this reads $A_\Phi^\top = VSU^\top$ with  $V\in \RR^{n \times n}$, $S = \mathrm{diag}((\sigma_i)_i)\in \RR^{n \times n}$ and $U\in \RR^{m \times n}$. Using the SVD, the solution to \eqref{eq:auxprobldiscrete} is simply \begin{align}\label{eq:FilterComputationReckernel}
\Psi^\gamma = U S^{-1} V^\top (E^\gamma)^\top. 
\end{align}    
As discussed in Section 2, $\R_\Phi$ spreads the frequencies of the preimage $f$ over the complete spectrum, and this property is inherited by the discrete forward operator $A_\Phi$. With the strategy developed in \Cref{sec:continuousLARK}, the LARK for the discrete setting reads \begin{align}\label{eq:FilterComputationReckerneltau}
    \Psi^{\gamma,\tau} := U \Sigma_\tau V^\top (E^\gamma)^\top, \qquad \text{where we set } \Sigma_\tau := \text{diag}\left( \left( F_\tau (\sigma_i) \right)_{i} \right) S^{-1}
    \end{align}
    with spectral Filter $F_\tau$. Please note that the additional regularization parameter $n$ does not appear here in the discrete case since the range $\mathrm{Ran}(A_\Phi)$ is closed and hence, no range condition needs to be imposed. As in the continuous case, $\Psi^{\gamma,\tau}$ corresponds to the solution of the auxiliary problem with modified right-hand side, \textit{i.e.} 
    $$
    A_{\Phi}^\top \Psi^{\gamma,\tau}  = E^{\gamma,\tau}:= E^\gamma (V\Sigma_\tau S V^\top).
    $$
    We note that $E^{\gamma,\tau}$ defines a \textit{nascent} matrix as it tends to the identity matrix when $\gamma,\tau \to 0$. To simplify the notation we omit the second regularization parameter $\tau$ in the rest of the manuscript (if there is no ambiguity). \\
 
\begin{figure}[h!]\centering
\begin{subfigure}{0.27\linewidth}
    \includegraphics[width=\linewidth]{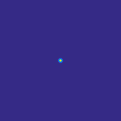}
    \caption{}
\end{subfigure}
\begin{subfigure}{0.27\linewidth}
    \includegraphics[width=\linewidth]{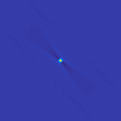}
    \caption{}
\end{subfigure}
\begin{subfigure}{0.27\linewidth}
    \includegraphics[width=\linewidth]{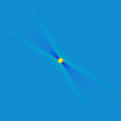}
    \caption{}
\end{subfigure}
    \caption{Visualization of the different mollifiers for the central pixel on a $N\times N$ grid with $N=121$ and $\Phi=30^\circ$ for: 
    (a) $E^\gamma$, (b) $E^{\gamma,\tau}$, (c) $\arctan \left(N \cdot E^{\gamma,\tau}\right)$ (for a better visualization) where $\gamma = 1/N$ and $\tau = 5\sigma_n$.}
    \label{fig:placeholder}
\end{figure}

\subsection{Ideal case: Exact interpolation in the pixel basis}\label{sec:discrete}

In this subsection, we assume that $f$ is exactly represented by $\Pi_n$, which means that $f = \Pi_n \fvec$. In this case, using the proposed reconstruction kernel $\Psi^\gamma$ allows in fact to recover $\fvec^\gamma$ from the data $\gvec$, as the following theorem shows. 
 
\begin{theorem}
    Let $A_{\Phi,\Xi_m,\Pi_n}, \, E^\gamma$ and $\Psi^\gamma$ as above in Section 3.1. If $f$ can be exactly represented by $\Pi_n$, it holds
    $$(\Psi^\gamma)^\top \gvec = \fvec^\gamma.$$
\end{theorem}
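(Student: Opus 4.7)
The plan is to chain together three observations: that $\gvec$ factors through $A_\Phi$ under the interpolation assumption, that $\Psi^\gamma$ exactly solves the auxiliary problem (not merely in a least-squares sense), and that a single transposition yields the claim.

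First, I would unpack the forward model. Since $f = \Pi_n\fvec$ by assumption, the semi-discrete data satisfies $\gvec = \Xi_m \R_\Phi f = \Xi_m \R_\Phi \Pi_n \fvec = A_\Phi \fvec$ by the very definition of $A_\Phi = A_{\Phi,\Xi_m,\Pi_n}$ in \eqref{eq:def_A_Phi}. This reduces the claim to showing $(\Psi^\gamma)^\top A_\Phi \fvec = E^\gamma \fvec$, which would follow from the identity $(\Psi^\gamma)^\top A_\Phi = E^\gamma$, i.e. from the auxiliary problem \eqref{eq:auxprobldiscrete} being satisfied exactly.

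The second step is to verify this exactness directly from the SVD construction \eqref{eq:FilterComputationReckernel}. Writing $A_\Phi^\top = VSU^\top$ with $V\in\RR^{n\times n}$, $S\in\RR^{n\times n}$ and $U\in\RR^{m\times n}$, one computes
\begin{equation*}
A_\Phi^\top \Psi^\gamma = VSU^\top \cdot U S^{-1} V^\top (E^\gamma)^\top = V S S^{-1} V^\top (E^\gamma)^\top = V V^\top (E^\gamma)^\top = (E^\gamma)^\top,
\end{equation*}
where the last equality uses that $V$ is a square orthogonal matrix (so $VV^\top = I_n$). This step implicitly relies on $A_\Phi$ having full column rank, which is the discrete counterpart of the injectivity statement of \Cref{theo:uniqueness}; if the paper wants to be fully rigorous here, one should note that the nonsingularity of $S$ is precisely the assumption that all discrete singular values are strictly positive.

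Transposing the identity just obtained gives $(\Psi^\gamma)^\top A_\Phi = E^\gamma$, hence
\begin{equation*}
(\Psi^\gamma)^\top \gvec = (\Psi^\gamma)^\top A_\Phi \fvec = E^\gamma \fvec = \fvec^\gamma,
\end{equation*}
which is the assertion. I do not expect a real obstacle in this proof — the only subtle point is the distinction between solving \eqref{eq:auxprobldiscrete} exactly versus in a least-squares sense, and the square-orthogonality of $V$ (from the SVD of an $m\times n$ matrix with $m\geq n$ and full column rank) resolves it. The result is essentially the discrete analogue of the defining identity $\langle g, \psi_\vx^\gamma\rangle = \langle f, e_\vx^\gamma\rangle$ of the approximate inverse, and the proof is a one-line linear-algebra calculation once the exact solvability of the auxiliary problem has been checked.
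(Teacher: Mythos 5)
Your proof is correct and follows essentially the same route as the paper: factor the data as $\gvec = A_\Phi\fvec$ using $f=\Pi_n\fvec$, invoke the auxiliary problem $A_\Phi^\top\Psi^\gamma=(E^\gamma)^\top$, and transpose. Your additional SVD verification that the auxiliary problem is solved exactly (via $VV^\top=I_n$, assuming full column rank) is a sound supporting observation that the paper leaves implicit, but it does not change the argument.
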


\begin{proof}
    With $f=\Pi_n \fvec$, it follows $\gvec = \Xi_m \R_\Phi f = \Xi_m \R_\Phi \Pi_n \fvec = A_\Phi \fvec$. Thus, with the auxiliary problem \eqref{eq:auxprobldiscrete}, it holds
    $$(\Psi^\gamma)^\top \gvec = (\Psi^\gamma)^\top A_\Phi \fvec = \left(A_\Phi^\top \Psi^\gamma\right)^\top \fvec= (E^\gamma)^\top \fvec = \fvec^\gamma.$$
\end{proof}
 
Now we further consider, that $f$ can be exactly represented in a regular pixel basis, i.e. 
$$
f = \sum_{i=1}^n \fvec_i \phi (\cdot -\vx_i)
\qquad
\text{with } 
\phi(\vx) = 
\begin{cases}
    1 &\text{if } \vx \in \big[-\frac{h}{2},\frac{h}{2}\big) \times \big[-\frac{h}{2},\frac{h}{2}\big),\\
    0 &\text{otherwise,}
\end{cases}
$$
$h>0$ and where $\vx_i$ denotes the center of the $i$-th pixel. Here the regular pixel grid is a square, \textit{i.e.} $h = 1/\sqrt{n}$. In particular, each vector $\vx\in\RR^2$ belongs to exactly one pixel, 
say the one associated to $\vx_k$, i.e. $\phi(\vx_k-\vx_i)=\delta_{ki}$, and hence, $\fvec = (f(\vx_1),\dots,f(\vx_n))^\top \in \RR^n$. Using the exact interpolation property $f=\Pi_n \fvec$, it holds with $P_i$ denoting the pixel associated to the center $\vx_i$ $$\langle f, e^\gamma_{\vx_k}\rangle = \sum_{i=1}^n f(\vx_i)\, \int_{P_i} e^\gamma_{\vx_k} (\vx)\,\mathrm{d}\vx  .$$
Thus, by choosing $(\vx_i)_{i=1,\dots,n}$ as reconstruction points, we can recover
$ \fvec^\gamma := E^\gamma \fvec$
with symmetric $E^\gamma \in \RR^{n\times n}$, $E^\gamma_{kl} := e^\gamma_{\vx_k}(\vx_l)$ from the discrete data $\gvec =A_\Phi \fvec$.
 
Besides the approximate inverse reconstruction $\text{f}^\gamma = (\Psi^\gamma)^\top \gvec$, we also consider its constrained counterpart $\text{f}^{\gamma,\lambda} = (\Psi^\gamma)^\top \mathcal{D}_\lambda(\gvec)$ with $\mathcal{D}_\lambda$ as in eq. \eqref{eq:reg-AI-discrete}.  
 
\paragraph{Numerical results} 
First, we want to visualize our reconstruction kernel which we computed using the spectral filter $F_\tau(\sigma) = \frac{\sigma}{\tau} \arctan \left( \frac{\tau}{\sigma} \right)$ and the Gaussian mollifer. 
Figure \ref{fig:LARKs_pixel_different_Phi} shows the reconstruction kernel LARK computed for the central pixel as reconstruction point for different $\Phi$ in the $(\theta,s)$-coordinate system. One can see the oscillatory nature of the LARK which increases with $\Phi$ and which is expected given the severe ill-posedness of the limited-angle problem. This behavior, {\color{black}more precisely the exponential decay of the singular values of $A_\Phi$ for different $\Phi$} is also validated in Fig. \ref{fig:singular_values_101_pixel}. {\color{black} Like expressed in \cite{Natterer86}, the exponential decay sets in earlier as $\Phi$ increases.}
 
\begin{figure}[!h]
    \centering
    \includegraphics[width=\linewidth]{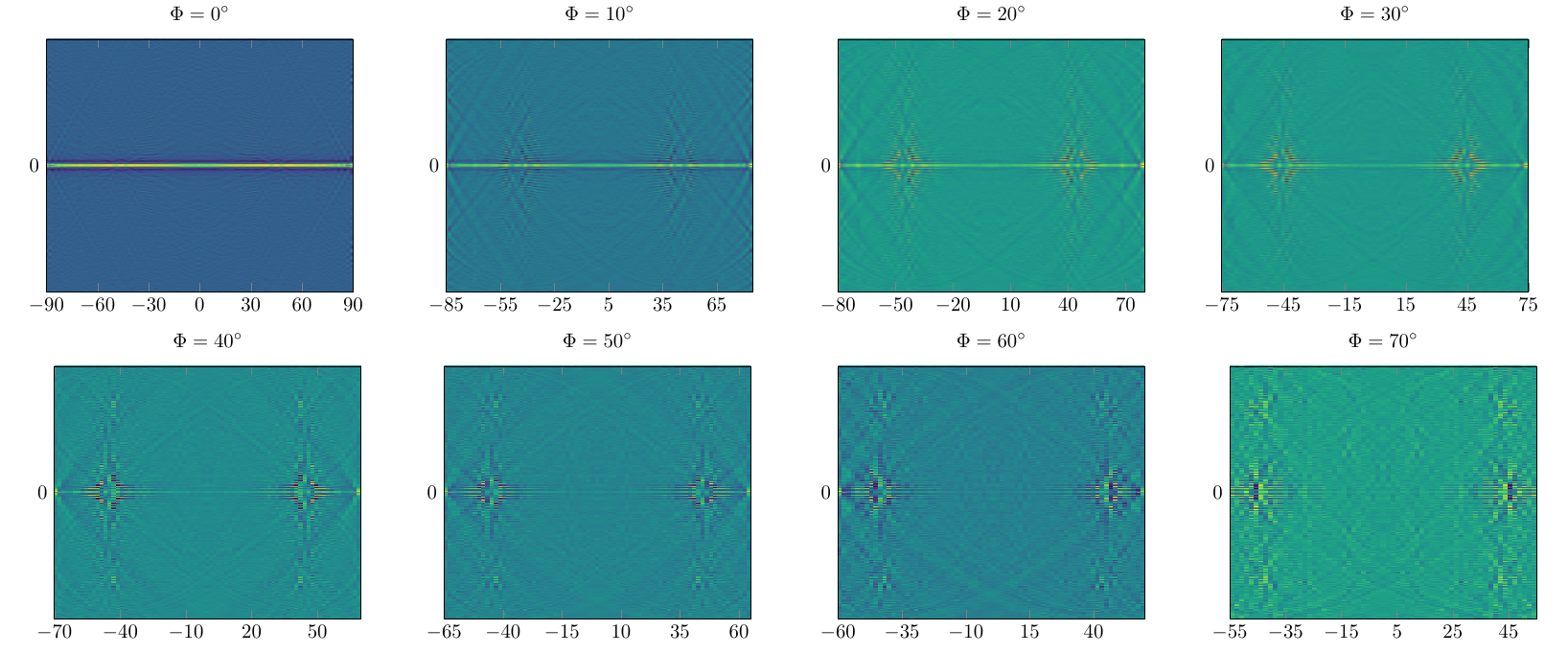}
    \caption{Evolution of the reconstruction kernel LARK      at reconstruction point $\vz=0$ as $\Phi$ increases.     }
    \label{fig:LARKs_pixel_different_Phi}
\end{figure}

\begin{figure}[h!]
    \centering
    \includegraphics[width=0.8\linewidth]{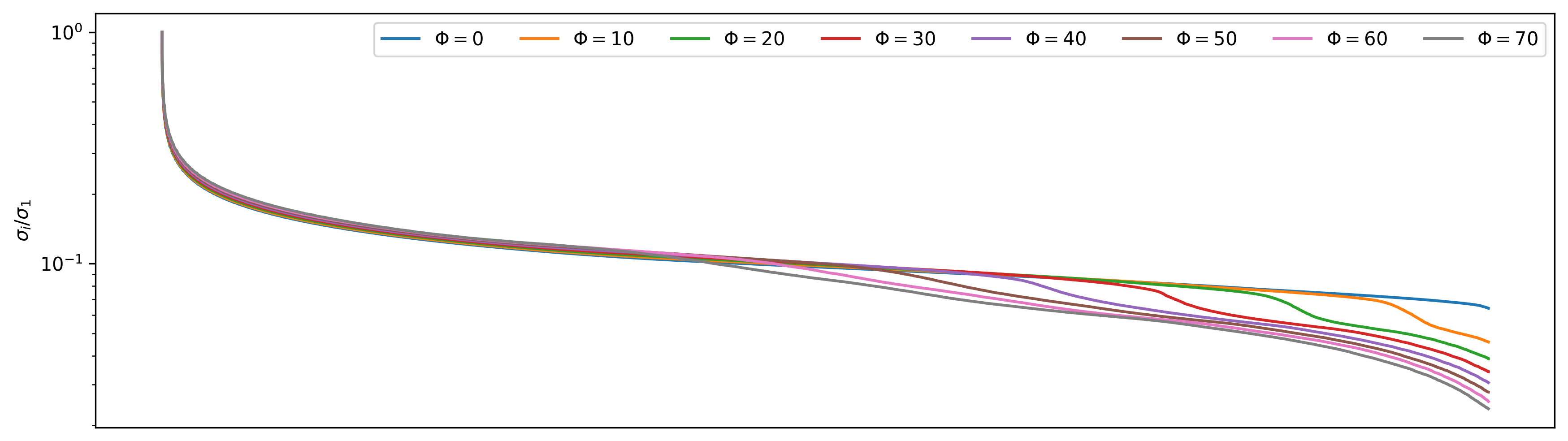}
    \caption{Evolution of the singular values of $A_\Phi$ for the pixel grid and for different $\Phi$.}
    \label{fig:singular_values_101_pixel}
\end{figure}
 
To validate our reconstruction approach, we generate discrete data for the Shepp-Logan-Phantom with $n=201\times 201$, $m=512\times 200$ in case of 180 degrees and for different limited-angle scenarios, i.e. for different values of $\Phi$. To simulate measurement noise, we add samples of normally distributed random variables to the discrete data $\gvec$, resulting noisy data $\gvec^{\epsilon}$ with different noise levels ranging from $0.05\%$ to $10\%$, see Table \ref{tab:data-snr}. The noise levels refer to the relative data error $\frac{\norm[\gvec^{\delta}-\gvec]}{\norm[\gvec]}$. Table \ref{tab:data-snr} also states the corresponding SNR values. Please note that the specific perturbation vector was computed such that the given values hold for all $\Phi$ and independent of the dimension of $\gvec$. 	\begin{table}[H]
		\centering
		\begin{tabular}{c|c|c|c|c|c}
		0.05 \%&0.10 \%&1.00 \%&2.00 \%&5.00 \%&10.00 \%\\\hline
		66.02&60.00&40.00&33.98&26.03&20.04
		\end{tabular}
		\caption{\label{tab:data-snr}SNR values for different noise levels}
	\end{table}

We build the mollifier $E^\gamma$ from the Gaussian function $\ai{e}{\vz}{\gamma}(\vx) = \frac{1}{2\pi \gamma} \exp\left(-\frac{\|\vx-\vz\|^2}{2\gamma}\right)$ and compute the respective reconstruction kernel $\Psi^\gamma$ via \eqref{eq:FilterComputationReckernel}. To obtain the constrained solution $\fvec^{\gamma,\lambda}$, we use in \eqref{eq:reg-AI-discrete} the total variation regularization, more precisely the continuously differentiable approximation proposed by Acar and Vogel in \cite{acar1994}, such that we can solve the minimization problem with gradient descent. 
 
\renewcommand{\tablepath}{figures/stability_tests/tables}
We compare our approach using the limited-angle reconstruction kernel (LARK) as well as the constrained counterpart  (CLARK) to the standard filtered backprojection algorithm (FBP) with Shepp-Logan filter and to the standard minimization approach \eqref{eq:TVstandard} with total variation (TV). Figure \ref{fig:compNoiseLevel} shows the respective results for a missing angular range of $30^\circ$ and for the different noise levels from Table \ref{tab:data-snr}. 
 
  \begin{figure}
        \centering
                \includegraphics[width=\textwidth]{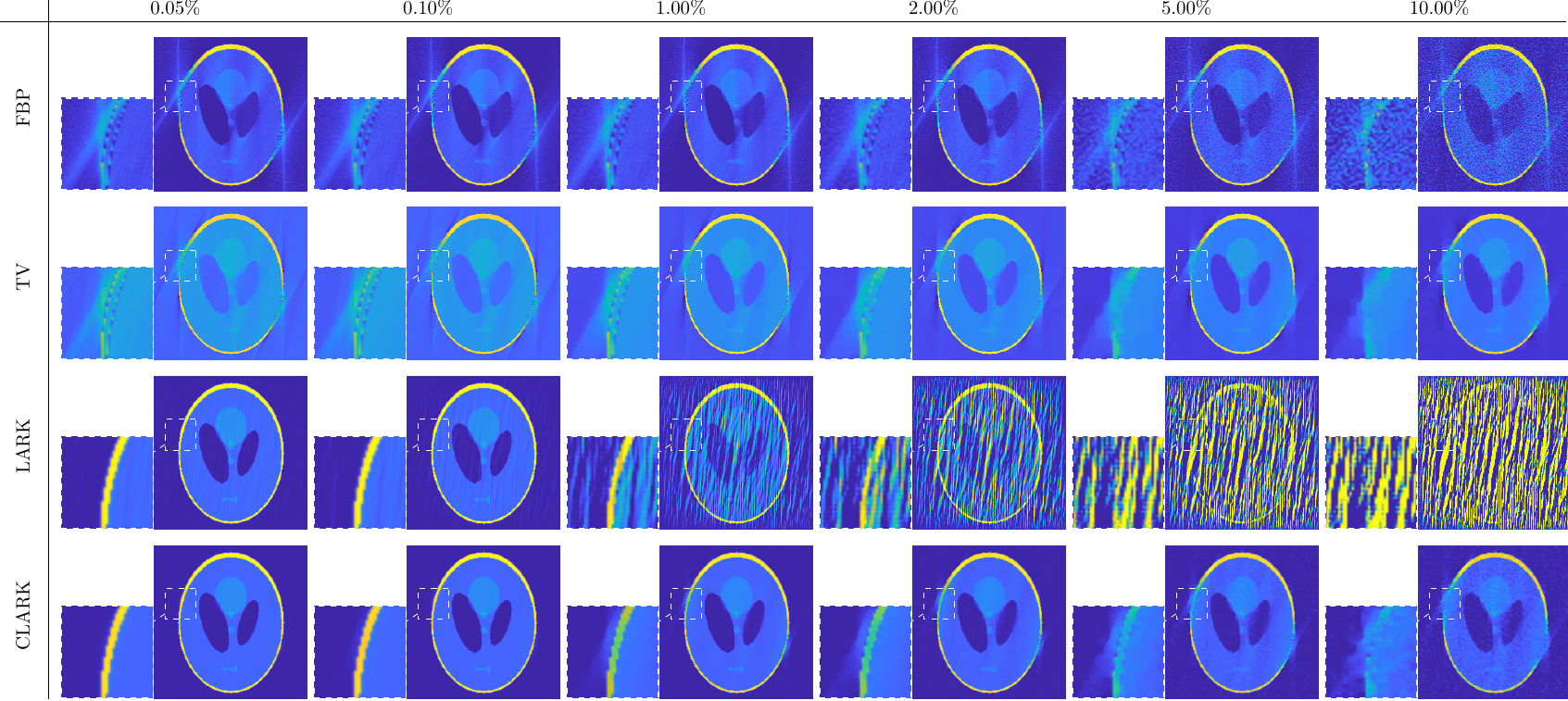}
        \caption{Comparison of different noise levels for $30^\circ$ missing angles}
        \label{fig:compNoiseLevel}
    \end{figure}
 
The FBP performs as expected and known from the literature: The object features associated to the missing angular range are not recovered and instead the characteristic streak artefacts arise. Standard TV produces sharp edges while suppressing the noise but also fails to recover the features associated to the $30^\circ$ missing angular range. Using our reconstruction kernel, however, we obtain for small noise level a good approximation to the ground truth - despite the $30^\circ$ missing angles: All features are correctly reconstructed, even those associated to the missing angle, without the characteristic limited-angle streak artefacts. 

However, as the noise level increases, we observe for LARK, i.e. without the additional constraint on the solution space, a new type of artefacts: For $1\%$ noise, the interior structure of the object is already significantly overlaid with wave-type artefacts which stem from the small singular values, see \Cref{fig:singular_and_kernel} d). With further increasing noise level, these artefacts become quickly more prominent covering the object beyond recognition. These artefacts are not a consequence of the discretization -- they have been observed and analysed in the continuous model as well, see \cite{louis1986incomplete} and \cite{peres1979} . 

In contrast, our constrained approach CLARK provides stable results even for high noise levels: The artefacts arising from the small singular values are successfully suppressed and, furthermore, the object features associated to the missing angles are correctly reconstructed. As to be expected, the higher the noise level, the larger the regularization parameter needs to be chosen which results in overall smoother reconstruction results and a gradual return of the limited-angle artefacts. Nevertheless, for $5\%$ noise, the "{}missing"{} features still get more reliably reconstructed by our approach than with standard FBP or TV. \\
 
Next, we illustrate how our approach performs with respect to $\Phi$, i.e. with respect to the extent of the missing angular range. The larger $\Phi$, the more severe becomes the ill-posedness. Therefore, for a fair comparison, the considered noise level varies for each $\Phi$. The respective values are given in the header of Figure \ref{fig:compAngularComp}. For each pair of $\Phi$ and noise level, we then show reconstruction results from standard FBP (first row), classic TV (second row), LARK (third row) and CLARK (fourth row), respectively. While the standard TV approach can handle a small missing angular range of $10^\circ$, it fails to recover all object features as the latter increases. Our approach, however, still succeeds in correctly recovering all object features, even those associated to the missing angles. Even for data missing on a large angular range, the contours are closed correctly (also those invisible in the FBP- and TV-reconstruction). This shows that the proposed method in fact stably inverts the limited-angle forward operator in the discrete scenario. However, we also observe that the tolerable noise level decreases as $\Phi$ increases. For 10 degrees missing, 2\% noise can be handled very well by CLARK, (and LARK apart from the wavetype artefacts stemming from the singular functions to small singular values). For $70^\circ$ missing, we could turn up the noise level until 0,1\% to get such a good reconstruction. For higher noise levels, LARK and CLARK encounter increasing difficulty in correctly recovering the missing features. This is again a direct consequence of the severely ill-posed nature of the limited-angle problem. 
 
                            \begin{figure}
        \centering
                \includegraphics[width=\linewidth]{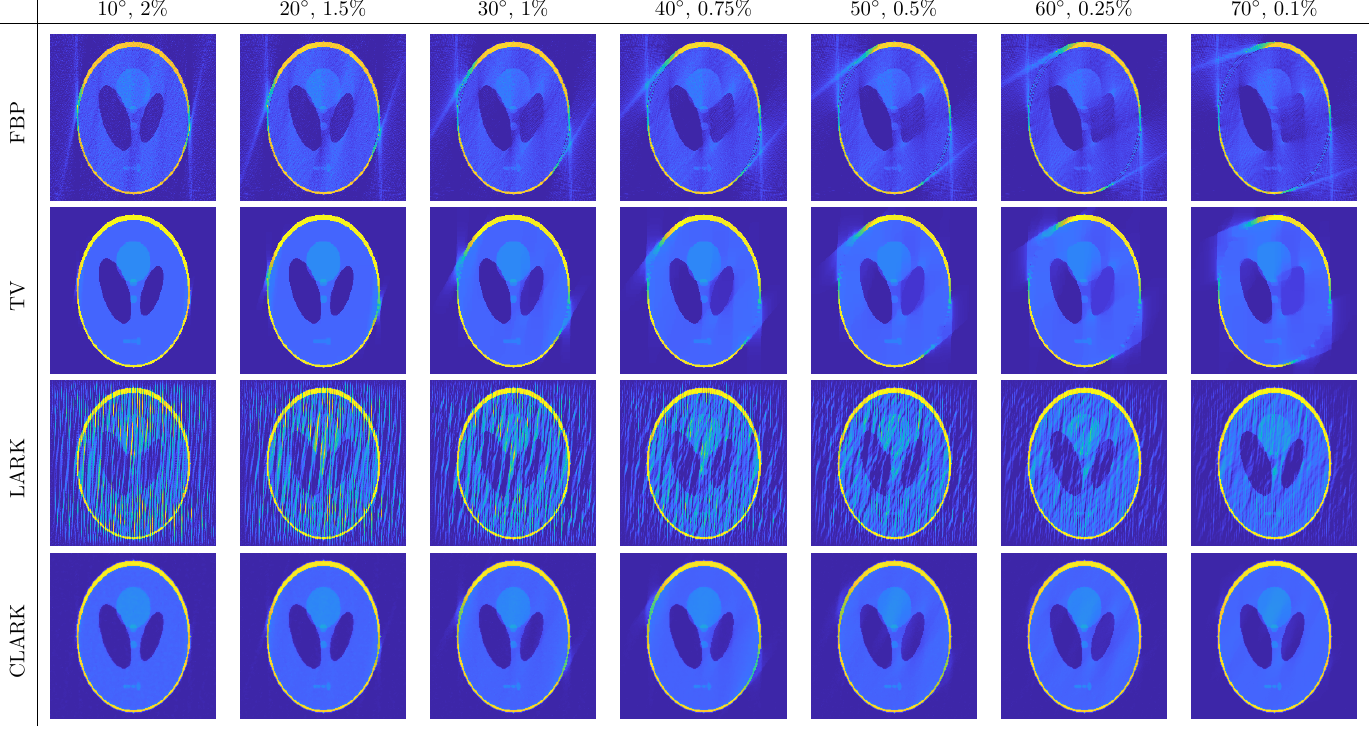}   
        \caption{Comparison for different angular ranges $\Phi$ and noise levels}
        \label{fig:compAngularComp}
    \end{figure}
 
\subsection{The real-data case: a representation issue}
\label{sec:semi_discrete}

In general, a given $f\in L_2(\Omega)$ will not be exactly represented by the interpolation operator $\Pi_n$. In this case, real measurements or analytical data $\gvec = \Xi_m\R_\Phi f$ will differ slightly in nature from the synthetic data $\Xi_m \R_\Phi \Pi_n \fvec$ due to the non-vanishing approximation error $\|f-\Pi_n \fvec\|>0$. Nevertheless, we still want to use a discrete reconstruction kernel $\Psi^\gamma$ obtained from solving the auxiliary problem with a discrete projection matrix $A_{\Phi,\Xi_m,\Pi_n}$ to circumvent the problems arising when dealing with the semi-discrete forward operator.\\

While such a small modelling error does not play a significant role in the full-data scenario, this changes in the severely ill-posed limited-angle case. This is illustrated in Figure \ref{fig:AluSi_representation_issue} for both real data, courtesy of the Fraunhofer Institute for non-destructive testing (IzfP) in Saarbrücken, as well as analytic data for the Shepp-Logan Phantom (please note that the Radon transform of geometric shapes such as ellipses can be stated exactly). Regarding the IzfP data, a silicium cylinder circled by aluminum was scanned by CT, resulting in the real data $\gvec$ for $\Phi = 0$. A standard FBP reconstruction then leads to a fair reconstruction of the object $\fvec$. 
We then apply the LARK $\Psi_\Phi^\gamma$, as built in the previous section, to the real / analytical data with limited angle, noted here $\gvec_{\Phi}$ (second and fourth rows). We can see that the real data scenario and the analytic one fail gradually as $\Phi$ increases. Looking more closely to the data and particularly to the difference between them, one can notice structural differences in-between. One can interpret it in the two following equivalent ways.
\begin{itemize}
    \item \textit{In the $\fvec$-space}, generated by the pixel grid, a disk is represented as a collection of pixels with harsh and inaccurate edges. This construction actually strongly constrains the space of $\fvec$ and therefore leads to a less ill-conditioned problem $A_\Phi \fvec = \gvec$ to solve. 
    \item \textit{In the $\gvec$-space}, many \textit{oscillations} embed the boundaries of the pixels. Real and analytical data do not contain such features, as it stands for a sampled version of the continuous operator $\R_\Phi$. To visualize this difference, we consider the following coefficients:
    $$
    \varrho_n^\Phi = \vert S^{-1} U(\gvec_\Phi - A_\Phi \fvec)\vert.
    $$
    These coefficients would then be multiplied by $V$ to get the LARK of $(\gvec_\Phi - A_\Phi \fvec)$ for $\gamma \to 0$. They are depicted in Figure \ref{fig:AluSi_representation_issue} for $\Phi=60^\circ$ and for both real and analytical data. The different \textit{encoding} of the contours between $A_\Phi$ and $\Xi_m \R_\Phi$ results in large coefficients $\varrho_n^\Phi$ especially for the largest inverse singular values.  
    Since the missing details are precisely contained in the associated singular vectors in $V$, the LARK, as defined in the previous section, is then expected to fail when applied to semi-discrete data (real or analytical).
\end{itemize}    

\begin{figure}[!h]
    \centering
    \includegraphics[width=\linewidth]{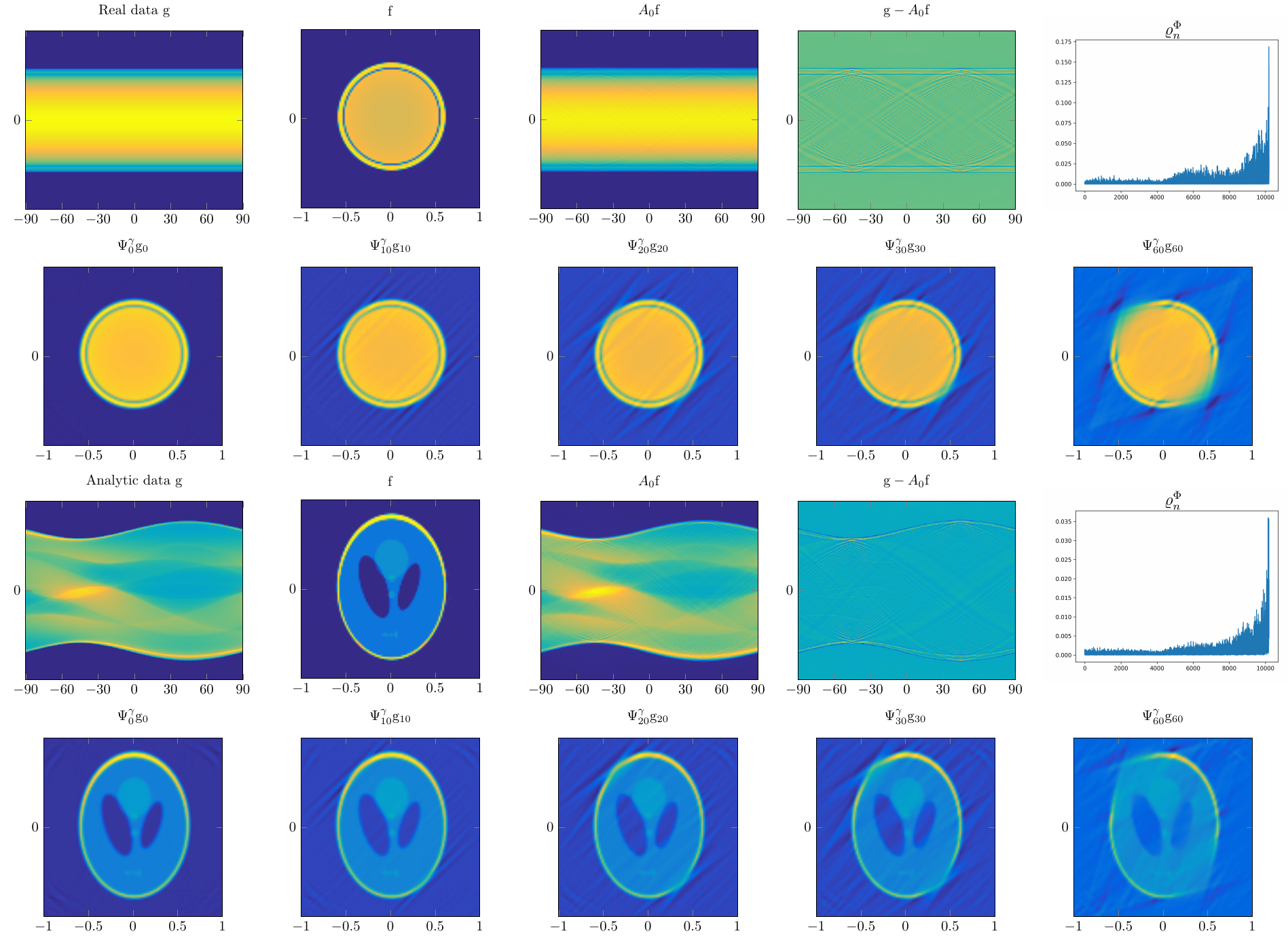}
    \caption{Application of the LARK defined on a rectangular pixel grid on real (first two rows) and analytical data (last two rows).}
    \label{fig:AluSi_representation_issue}
\end{figure}

In conclusion, the standard construction of $A_\Phi$ on a pixel grid delivers a too simplistic and too \textit{well-conditioned} forward model which does not suffice to account for the complexity of the limited-angle case. We therefore need to make $A_\Phi \fvec = \gvec$ more difficult to solve in order to apply the (C)LARK approach to real and analytic data. 

The key when approximating $\Xi_m \R_\Phi$ by $A_\Phi$ relies upon the construction of $\Pi_n$ which depends on the grid $(\vx_i)_{i=1\ldots n}$ and the function $\phi$. In approximation theory, interpolation operators $\Pi_n$ as in introduced in (9) are well-known and studied in the context of reproducing kernel Hilbert spaces and scattered data approximation. In particular, interpolation by a suitable \emph{radial basis function} $\phi$ leads - at least for certain $f$ - to pointwise error estimates of the form
\begin{equation} \label{eq:interpolerrorRKHS} \| f - \Pi_n \fvec\|_\infty \leq C\, P_{\phi}(h) \, \|f\|_\phi \end{equation}
with a function $P_\phi$ depending on the so-called \emph{grid density measure} (or \emph{fill distance}) $h$, see \eqref{eq:griddensitymeasure}, a constant $C$ independent of $f$ and $h$ and a norm $\norm[\cdot]_\phi$.  
Depending on the basis function, the asymptotic behavior of $P_\phi$ is given by $\mathcal{O}(h^\mu)$ with a power $\mu$ depending on $\phi$, see e.g. \cite{narcowich2003refined} and references therein, i.e. the interpolation error converges to zero for $h\to 0$.  For infinitely smooth radial basis functions like the Gaussian, convergence is even exponential \cite{chen2005scientific}. 
 
After a brief introduction into this general theory, we will make use of this type of estimates to derive and analyze error bounds for the reconstruction error
$$ 
\left\| (\Psi^\gamma)^\top \gvec - \fvec^\gamma_\vz\right\|_{\mathbb{R}^{r}} ,
$$
i.e. for the error we obtain when applying the reconstruction kernel $\Psi^\gamma$ from the fully-discrete case with interpolation operator $\Pi_n$ to the (true) semi-discrete data $\gvec$. 
{\color{black} \begin{remark}
    {\color{black} Reproducing kernel Hilbert spaces have been used in the context of inverse problems in the past.     In \cite{krebs09} the authors examined general error estimates and parameter choices for Thikononv regularizations of general inverse problems.  Specifically for the Radon transform they were studied in the context of algebraic reconstruction techniques in \cite{DeMarchi2018} for noise-free data, while  in \cite{wang21} a generalized approach was considered, including Thikonov regularization and noisy data. In both cases the limitations in the data, however, came from \emph{scattering}, i.e. the data were distributed irregularly with some missing lines.     } 
\end{remark}}
 
\subsubsection{An introduction to interpolation with radial basis functions} 
 
\subsubsection*{General regular grids}
 
As indicated by \eqref{eq:interpolerrorRKHS}, a key factor in choosing the interpolation points $(\vx_i)_{i=1,\dots,n}$ is the grid density measure $h$, which is defined by
\begin{align}\label{eq:griddensitymeasure}
    h:=\sup\limits_{x\in \Omega_n}\min\limits_{1\leq j\leq M}\norm[\vx-\vx_j]_2,
\end{align}
and which gives the largest distance between a point $x\in\Omega_n$ and the closest grid point, cf. \cite{Wendland98}. The set $\Omega_n$ here denotes the largest connected domain containing $\cup_{i} \mathrm{supp}(\phi(\cdot - \vx_i))$ and such that $\mathrm{supp} (f) \subset \Omega_n \subseteq \Omega$. 
The grid points $(\vx_i)_{i=1,\dots,n}$ can form regular shapes resulting, e.g., in rectangular, circular or elliptic grids as illustrated in \Cref{fig:grid_types}, but also other types of point clouds could be possible. Nevertheless, we assume that the grid is \textit{regular enough}, \textit{i.e.}  $\min_{i \neq j}\Vert \vx_i -\vx_j\Vert \lesssim \max_{i\neq j}\Vert \vx_i -\vx_j\Vert$, so that $\Psi^\gamma$ can be stably computed. This consideration is difficult to quantify, but choosing extremely irregular scattered points can lead to numerical instabilities since the domain $\Omega$ is then not uniformly encoded in the forward model, which may hinder the accurate representation of spatial structures, see also stability considerations as discussed in \cite{wendland05,chen2005scientific}.

\begin{figure}[h]
    \begin{subfigure}{0.3\textwidth}\centering
        \includegraphics[width=0.8\textwidth]{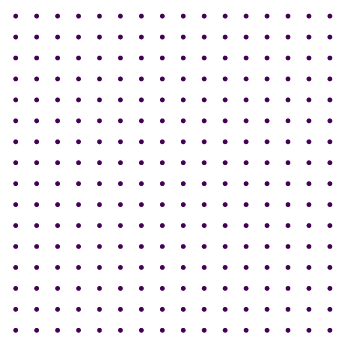}      
        \caption{}
        \label{fig:sq_grid}
    \end{subfigure}\hspace{0.04\textwidth}
    \begin{subfigure}{0.3\textwidth}\centering
        \includegraphics[width=0.8\textwidth]{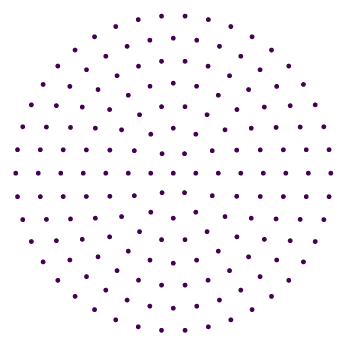}   
        \caption{}       \label{fig:circgrid}        
    \end{subfigure}\hspace{0.04\textwidth}
    \begin{subfigure}{0.3\textwidth}\centering
        \includegraphics[width=0.6\textwidth]{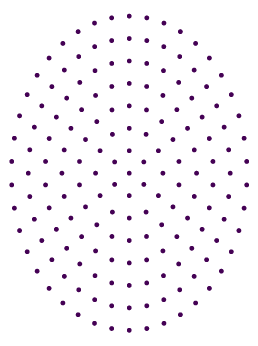}       
        \caption{}  \label{fig:ellgrid}         
    \end{subfigure}  
        \caption{Example of interpolation grids}    \label{fig:grid_types}
\end{figure}

				    \par
 
\subsubsection*{Prerequisites on the basis function $\phi$} 
We provide a short overview regarding the choice of $\phi$ and resulting interpolation properties of $\Pi_n$. For more details, we refer, e.g., to \cite{wendland05}.

To exploit the invariance properties of the Radon transform, we consider $\phi$ {\color{black} $\in C(\mathbb{R}^2) \cap L_1(\mathbb{R}^2)$} to be a \emph{radial} basis function (hence isotropic and rotation invariant). We also allow the basis function to depend on some parameter, i.e. we write in the following $\phi(\,\cdot\,;\mu)$ with $\mu>0$. In the context of reproducing kernel Hilbert spaces, $\mu$ is typically the scaling parameter.

For $\mu$ fixed, $\phi(\,\cdot\,;\mu)$ shall further be \emph{positive definite}, i.e. for any set of distinct points $\lbrace \widetilde{\vx}_j \ : \ 1\leq j \leq n \rbrace$, the quadratic form given by
\begin{equation}
	\mathcal{Q}\colon \CC^n\times \CC^n\,,\, (\text{u},\text{v})\mapsto \sum\limits_{k,l=1}^n u_k\bar{v_l} \, \phi\left(\vx_l-\vx_k; \mu\right)\label{eq:spd_cond}
\end{equation}
is positive definite. 
On the one hand, this property ensures that the system
\begin{equation}
	\label{eq:coeff_sys} f(\vx_j)=\Pi_n \fvec (\vx_j), \quad j=1,\dots,n
\end{equation}is uniquely solvable (for $f$ that can be evaluated pointwise). On the other hand, this further implies that the Fourier transform of $\phi(\,\cdot\,;\mu)$ is real-valued and positive as shown in \cite{wendland05}, Theorem 6.11.

Error bounds of type \eqref{eq:interpolerrorRKHS} can be obtained for $f$ belonging to the so-called \emph{native space}
\begin{equation}
	\mathcal{N}_\phi(\RR^2):=\left\{ f \in L_2(\Omega)\cap C(\RR^2) \, : \,  \int\limits_{\RR^2} \left|\widehat{f}(\xi)\right|^2\left(\widehat{\phi}(\xi;\mu)\right)^{-1}\diff\xi<\infty\right\} \label{eq:native_space}
\end{equation}
	Equipped with the inner product
	\begin{align*}
				\bil{f}{g}_{\mathcal{N}_\phi(\RR^2)}:=\frac{1}{2\pi} \int\limits_{\RR^2} \hat{f}(\xi)\overline{\hat{g}(\xi)}\left(\hat{\phi}(\xi;\mu)\right)^{-1},
	\end{align*}
	the native space becomes a Hilbert space, see \cite{wendland05}, also referred to as \emph{reproducing kernel Hilbert space}. We denote the associated norm by $\norm{}_{\phi}$. 
    Elements of $\mathcal{N}_\phi(\Omega)$ are called \emph{dominated by $\phi$}. 
										
    \begin{remark}
    Please note that the author in \cite{wendland05} constructed the native space differently and then showed     the equivalence to $\mathcal{N}_\phi$ as given above, Further note that, for $\Omega\subseteq\RR^2$, one can identify functions of the corresponding native space $\mathcal{N}_\phi(\Omega)$ with functions of $\mathcal{N}_\phi(\RR^2)$ using natural restrictions and extensions, c.f. \cite{wendland05}. \end{remark}

In the following, we consider $\phi$ to satisfy in addition to being positive definite the following property which is essential for our purposes. 

\begin{assumption}\label{assumption:phi}
$\phi(\,\cdot\,;\mu)$ shall admit the existence of constants $C, h_0>0$ such that for every set of interpolation points $(\vx_i)_{i=1,\dots,n}$ with grid density measure $h\leq h_0$ and for any 
		$f$ dominated by $\phi$, it holds  	 	\begin{equation}\label{eq:upper_bound}
	 		\| \Pi_n \fvec(\vx)-f(\vx)\|_{\infty}  \leq C\, P_\phi(h) \, \|f\|_\phi 
	 	\end{equation}
        with function $P_\phi$ and a constant $C$ independent of $f$ and $h$.
\end{assumption}
We provide some examples for basis functions that satisfy this property.

\begin{itemize}
    \item Assumption \ref{assumption:phi} is satisfied by positive definite radial basis functions which are sufficiently smooth, more precisely, $\phi(\, \cdot \, ;\mu) \in C^k$  and its derivatives of order $k$ satisfies $D^\alpha_\vx \phi(\vx;\mu) = \mathcal{O} (\norm[\vx]_2^\nu)$ for $\norm[\vx] \to 0$ with a $\nu > 0$. In this case, it holds $P_\phi(h)=h^{(k+\nu)/2}$, see \cite{wendland05}, Theorem 11.11.
    \item The Gaussian function satisfies assumption \ref{assumption:phi} with $P_\phi(h)=\exp{(-c/h)}, \, c>0$, i.e. the decay of the interpolation error is exponential in $h$, see \cite{wendland05}, Theorem 11.22. 
    \item 
    Despite its fast convergence property, the Gaussian suffers the drawback of not having a compact support. This can result in stability issues regarding the actual interpolation (depending on the chosen grid), see \cite{wendland05}, and will result in a dense projection matrix $A_\Phi$. Therefore, basis functions with compact support would be favorable. This further guarantees that the interpolant $\Pi_n \fvec$ is compactly supported as well, which will facilitate the computation of our reconstruction error. For this reason, we propose to make use of the \emph{Wendland functions} $\phi^{2,k}$ introduced in \cite{Wendland95}, which are compactly supported and positive definite as demonstrated in \cite{Wendland95}.
 
    Starting with the truncated power function $\phi^l(r)=\max(1-r,0)^l$ one can define
    \begin{align} \label{eq:dim_walk_op}
		\phi^{2,k}=I^k\phi^{2+k} \quad \text{with} \quad  (I \phi)(r)=\int\limits_{r}^\infty t\phi(t)\d t,
    \end{align}
 	 		    which gives a sequence of positive definite functions, see \cite{Wendland95}.
	We can now consider the radial basis function 
    $$
    \phi(\vx;\mu,k):=    \phi^{2,k}\left(\frac{\norm[\vx]_2}{\mu}\right)
        $$     with scale parameter $\mu>0$ and $k\geq 1$, which serves as additional smoothness parameter (the larger $k$, the faster the decay of the Fourier transform of $\phi(\,\cdot\,;\mu,k)$). 
    
    These basis functions satisfy assumption \ref{assumption:phi} for $f$ sufficiently smooth with $P_\phi(h)=h^{2k+1}$, see \cite{wendland05}, Theorem 11.23.
             		  	 						
														    \begin{remark}
    In \cite{Wendland98}, it was shown that 
				        the native space \eqref{eq:native_space} of $\phi(\cdot;\mu,k)$ corresponds to the Sobolev space		
					$H^{k+\frac32}\left(\RR^2\right)$
				and that the norms $\norm_\phi$ and $\norm_{H^{k+\frac32}}$ are equivalent. 										        This delivers a precise characterization of elements in $L_2$ that are dominated by the Wendland functions. In particular, this means that for sufficiently large $k$ any dominated function has a continuous representative meaning pointwise evaluations are indeed possible. 
        \end{remark}
 
\begin{remark}
    In order to compute our discrete operator $A_\Phi$, we have to evaluate the Radon transform of the Wendland functions. These can be stated explicitly as worked out in the Appendix \ref{app:R_monomial}.    
\end{remark}
\end{itemize}
 
\subsubsection{Reconstruction error for dominated functions}
 
With the general approximation property of $\Pi_n$, we directly obtain
the following error estimate for the semi-discrete data associated to $f$ and $\Pi_n \fvec$.

\begin{corollary}
    Let $\phi$ be a compactly supported, positive definite radial basis function satisfying assumption \ref{assumption:phi}. Further, let $f$ be a function dominated by $\phi$ with $\Pi_n \fvec$ denoting the approximation of $f$ in the basis $(\phi(\cdot-\vx_i))_{i=1,\dots,n}$ with sufficiently small grid density measure $h$.\vspace{1ex}\\
    Then, it holds
    \begin{equation}
		\norm[\Xi_m \R_\Phi (\Pi_n\fvec-f)]_{\RR^m}\leq  2 \sqrt{m}\,  C \norm[f]_\phi  P_\phi(h) .\label{eq:error_bound}
	\end{equation}
\end{corollary}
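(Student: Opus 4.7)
The plan is to chain together three elementary estimates: a $\sqrt{m}$-bound going from the Euclidean norm on $\mathbb{R}^m$ to an $L_\infty$-bound on the sinogram, a one-dimensional chord-length bound going from $L_\infty$ on sinograms to $L_\infty$ on images, and finally Assumption \ref{assumption:phi} to control the pointwise interpolation error.

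First, I would write out the observation operator explicitly, so that
\begin{equation*}
\| \Xi_m \R_\Phi (\Pi_n\fvec - f) \|_{\mathbb{R}^m}^2
= \sum_{j=1}^m \bigl| \R_\Phi (\Pi_n \fvec - f)(s_j,\theta_j) \bigr|^2
\leq m \, \sup_{(s,\theta)\in Z_\Phi} \bigl| \R_\Phi (\Pi_n \fvec - f)(s,\theta) \bigr|^2,
\end{equation*}
which yields a factor $\sqrt{m}$ after taking square roots. Next I would use that both $f$ and $\Pi_n \fvec$ are supported in $\Omega_n \subseteq \Omega$, so for any admissible $(s,\theta)$ the line $\{x\in\mathbb{R}^2 : x^\top\theta = s\}$ meets the supports in a chord of length at most $2\sqrt{1-s^2}\leq 2$. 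By pulling out the $L_\infty$-norm of the integrand, this gives the pointwise estimate
\begin{equation*}
\bigl|\R_\Phi(\Pi_n\fvec - f)(s,\theta)\bigr| \;\leq\; 2\,\|\Pi_n\fvec - f\|_\infty.
\end{equation*}
Combining both estimates yields
$\|\Xi_m \R_\Phi(\Pi_n\fvec - f)\|_{\mathbb{R}^m} \leq 2\sqrt{m}\,\|\Pi_n\fvec - f\|_\infty$.

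Finally, I would invoke Assumption \ref{assumption:phi}: since $f$ is dominated by $\phi$ and the grid density $h$ is assumed small enough (below the threshold $h_0$), we have
$\|\Pi_n\fvec - f\|_\infty \leq C\, P_\phi(h)\,\|f\|_\phi$,
and substitution finishes the estimate \eqref{eq:error_bound}.

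I do not expect any real obstacle here; the statement is essentially a composition of known ingredients. The only minor point that deserves care is the supports: one must confirm that $\Pi_n\fvec$ inherits compact support from the compactly supported $\phi$ and that this support stays inside $\Omega$ (which is exactly how $\Omega_n$ is defined in the preceding discussion), so that the chord-length bound by $2$ is legitimate. Pointwise evaluation of $\Pi_n \fvec - f$ is also justified because, for the admissible basis functions (e.g.\ the Wendland functions discussed above), the native space embeds into a space of continuous functions.
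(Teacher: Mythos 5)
Your proof is correct and follows essentially the same route as the paper: the paper likewise bounds each entry pointwise by $\norm[\Pi_n\fvec-f]_\infty\,|\R_\Phi\mathbb{1}_\Omega((s,\theta)_l)|\leq 2\,C\,\norm[f]_\phi\,P_\phi(h)$ (your chord-length argument is exactly the evaluation of $\R_\Phi\mathbb{1}_\Omega$) and then picks up the factor $\sqrt{m}$ from the Euclidean norm of the $m$-vector. Your extra remarks on the support of $\Pi_n\fvec$ and on pointwise evaluation are sensible but not needed beyond what the paper already sets up.
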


\begin{proof}
    Let $\mathbb{1}_{\Omega}$ denote the characteristic function of the disk. Then, for $l=1,\dots,m$, it holds with     assumption \ref{assumption:phi} and $\textup{diam}(\Omega)\leq 2$
    \begin{align*}
        |\R_\Phi (\Pi_n\fvec-f)((s,\theta)_l)| \leq \norm[\Pi_n\fvec-f]_\infty \, |\R_\Phi \mathbb{1}_{\Omega}((s,\theta)_l)| \leq 2\,  C\, \norm[f]_\phi \, P_\phi(h).
    \end{align*}
    Applying the Euclidean norm on the vector
    \begin{align*}
        \left(\R_\Phi (\Pi_n\fvec-f)((s,\theta)_1),...,\R_\Phi (\Pi_n\fvec-f)((s,\theta)_m)\right)^\top
    \end{align*}
    yields the result.    
\end{proof}
 
Now, we can prove the following upper bound for the overall reconstruction error. 

\begin{theorem} \label{theorem:recerror}
Let $f$ be dominated by a compactly supported, positive definite radial basis function $\phi$ satisfying assumption \ref{assumption:phi} and denote $\gvec = \Xi_m \R_\Phi f$. 
Let further $E^\gamma$ be the discrete mollifier defined by \eqref{eq:semi_disc-moll} and let $\Psi_h^\gamma$ denote the solution to $A_{\Phi,\Xi_m,\Pi_n}^\top \Psi^\gamma = (E^\gamma)^\top$ with $\Pi_n$ as in \eqref{eq:interpol_op} with sufficiently small grid density measure $h$.\\
Then, the overall reconstruction error in the reconstruction points satisfies
		\begin{align}
			\left\|\left(\Psi_h^\gamma\right)^\top \gvec-\fvec^\gamma_\vz\right\|_{\RR^r}
            \leq C\norm[f]_{\phi} 
            \left(2\sqrt{m} \, \norm[\Psi_h^\gamma]_2+\sqrt{r}\right)  P_\phi(h)
            \label{eq:csrbf_rec_err}
		\end{align}
		with constant $C$ independent of $f,\, h, \gamma,m,r$.
  	\end{theorem}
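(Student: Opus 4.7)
The natural approach is a triangle inequality split that separates the discretization error in the data from the interpolation error in the reconstruction points. Let $\fvec\in\mathbb{R}^n$ denote the interpolation coefficients so that $\Pi_n\fvec$ is the $\phi$-interpolant of $f$ on the grid $(\vx_i)_{i=1,\dots,n}$. The key algebraic identity is the one already exploited in the ideal-case theorem from Section \ref{sec:discrete}: since $\Psi_h^\gamma$ solves $A_{\Phi,\Xi_m,\Pi_n}^\top \Psi^\gamma = (E^\gamma)^\top$, we have $(\Psi_h^\gamma)^\top A_\Phi \fvec = E^\gamma \fvec$, and by definition of $E^\gamma$ the right-hand side equals $\bigl(\langle \Pi_n\fvec, e^\gamma_{\vz_j}\rangle\bigr)_{j=1,\dots,r}$. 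Adding and subtracting this intermediate vector gives
\begin{equation*}
\left(\Psi_h^\gamma\right)^\top \gvec - \fvec^\gamma_\vz
= \left(\Psi_h^\gamma\right)^\top \Xi_m \R_\Phi(f-\Pi_n\fvec)
+ \Bigl(\langle \Pi_n\fvec - f,\, e^\gamma_{\vz_j}\rangle\Bigr)_{j=1,\dots,r},
\end{equation*}
because $\gvec - A_\Phi\fvec = \Xi_m\R_\Phi(f-\Pi_n\fvec)$.

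For the first (data-side) term, I would simply apply the sub-multiplicativity of the spectral norm, $\|(\Psi_h^\gamma)^\top v\|_{\RR^r}\leq \|\Psi_h^\gamma\|_2\,\|v\|_{\RR^m}$, and then invoke the preceding corollary to bound $\|\Xi_m\R_\Phi(f-\Pi_n\fvec)\|_{\RR^m}$ by $2\sqrt{m}\,C\,\|f\|_\phi\,P_\phi(h)$. This produces exactly the $2\sqrt{m}\,\|\Psi_h^\gamma\|_2$ piece of the bound.

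For the second (reconstruction-side) term, I would estimate each entry by Hölder's inequality,
\begin{equation*}
\bigl|\langle \Pi_n\fvec - f,\,e^\gamma_{\vz_j}\rangle\bigr|
\leq \|\Pi_n\fvec - f\|_{\infty}\,\|e^\gamma_{\vz_j}\|_{L_1(\Omega)},
\end{equation*}
and then use Assumption \ref{assumption:phi} on the first factor to gain the factor $C\,\|f\|_\phi\,P_\phi(h)$. By Definition \ref{def:mollifier}, each $e^\gamma_{\vz_j}$ has unit integral (and nonnegative in the standard choices, e.g.\ the Gaussian used in the numerical section), so $\|e^\gamma_{\vz_j}\|_{L_1}=1$ uniformly in $\gamma$ and $j$; summing the squared entries over the $r$ reconstruction points and taking the square root contributes the $\sqrt{r}$ factor. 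Combining the two estimates and factoring out $C\,\|f\|_\phi\,P_\phi(h)$ yields the claimed bound.

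The only subtle point in my view is the handling of the mollifier's $L_1$-norm: one must ensure it is bounded independently of $\gamma$, which is automatic as soon as $e^\gamma_{\vz}\geq 0$ (or, more generally, if $\sup_{\gamma,\vz}\|e^\gamma_{\vz}\|_{L_1}<\infty$). Apart from this, the argument is a pure triangle-inequality manipulation that recycles the already-proved corollary and the defining identity of $\Psi_h^\gamma$; the $\|\Psi_h^\gamma\|_2$ factor on the right is the unavoidable amplification that, together with the exponential ill-conditioning of $A_\Phi$ discussed in Section \ref{sec:continuousLARK}, ultimately dictates how small $h$ (and thus $P_\phi(h)$) must be chosen for the bound to be informative.
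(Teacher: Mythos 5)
Your proof is correct and follows essentially the same route as the paper: the same split of the error into the data-side term $(\Psi_h^\gamma)^\top\Xi_m\R_\Phi(f-\Pi_n\fvec)$ (bounded via the spectral norm of $\Psi_h^\gamma$ and the preceding corollary) and the mollifier-side term $\langle \Pi_n\fvec-f, e^\gamma_{\vz_j}\rangle$ (bounded entrywise via $\|e^\gamma_{\vz_j}\|_{L_1}=1$ and Assumption \ref{assumption:phi}). Your remark on the uniform boundedness of the mollifier's $L_1$-norm is the same normalization the paper uses implicitly.
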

 
\begin{remark}
    Please note that the solution of the auxiliary problem with projection matrix $A_{\Phi,\Xi_m,\Pi_n}$ depends on $m$ and $n$, i.e. in particular on the number of interpolation points and hence on the grid density measure $h$. Since the dependence on $h$ is crucial when interpreting the upper bound for the reconstruction error, we indicate this explicitly by adding an index $h$ to the reconstruction kernel, i.e. by writing $\Psi^\gamma_h$. \end{remark}

	\begin{proof} Let $\Pi_n \fvec$ denote the approximation to $f$ by $\phi$ and a grid with density measure $h$. Further, we define $\gvec_n := \Xi_m \R_\Phi \Pi_n \fvec = A_\Phi \fvec$. It holds
\begin{align*}
			\left\|\left(\Psi_h^\gamma\right)^\top \gvec - \fvec^\gamma_\vz\right\|_{\RR^r}&\leq 
			\left\|(\Psi_h^\gamma)^\top (\gvec - \gvec_n)\right\|_{\RR^r}+\left\|(\Psi_h^\gamma)^\top(\gvec_n)-\fvec^\gamma_\vz\right\|_{\RR^r}.
		\end{align*}	
This leads to two terms controlling the reconstruction error: 
\begin{itemize}
    \item For the first term, which quantifies the  interpolation error propagated by $\Psi_h^\gamma$ and $\R_\Phi$, we obtain $$\left\|(\Psi_h^\gamma)^\top (\gvec - \gvec_n)\right\|_{\RR^r} \leq \norm[\Psi_h^\gamma] \norm[\Xi_m \R_\Phi(\Pi_n\fvec-f)]_{\RR^m}\leq 2\sqrt{m}\,C\,\norm[\Psi_h^\gamma] \ \norm[f]_{\phi}\, P_\phi(h).$$ 
    \item Now we consider the second term.     
        By construction we have $$(\Psi_h^\gamma)^\top A_\Phi \fvec_n = E^\gamma \fvec,$$ i.e. $\left\|(\Psi_h^\gamma)^\top(\gvec_n)-\fvec^\gamma_\vz\right\|_{\RR^r} = \left\| E^\gamma \fvec-\fvec^\gamma_\vz\right\|_{\RR^r}$. With $$(E^\gamma \fvec)_l - (\fvec_{\vz}^\gamma)_l = \langle \Pi_n \fvec - f, e^\gamma_{\vz_l}\rangle,$$
        cf. Section 3.1, this term describes the approximation error propagated by the mollifier. 
        With assumption \ref{assumption:phi}         and $\| e^\gamma_{\vz_l}\|_{L_1} = 1$, we obtain for each $l=1,\dots, r$
        $$ |(E^\gamma \fvec)_l - (\fvec_{\vz}^\gamma)_l| = |\langle \Pi_n \fvec - f, e^\gamma_{\vz_l}\rangle| \leq
         C \norm[f]_{\phi}  P_\phi(h),$$
         i.e. $\left\| E^\gamma \fvec-\fvec^\gamma_\vz\right\|_{\RR^r}\leq \sqrt{r} \, C \norm[f]_{\phi}  P_\phi(h)$.
    \end{itemize}
    Combining both estimates yields the asserted statement. 	\end{proof}  
 
This error estimate offers insights into how the different factors affect the actual reconstruction error. 
    While the interpolation error - specifically the interplay of $C$, $\norm[f]_\phi$ and $P_\phi(h)$ - has been thoroughly studied in the literature on approximation theory, in particular for the basis functions given as examples above, we focus here on the interaction of $P_\phi(h)$ with the additional factor $\left(2\sqrt{m} \, \norm[\Psi_h^\gamma]_2+\sqrt{r}\right)$.     Large values of $m$ and $r$, i.e. a large number of measurements and reconstruction points, can be compensated for by choosing a grid with sufficiently small grid density measure $h$. 
        However, the smaller $h$, the larger becomes $\norm[\Psi_h^\gamma]$. This is due to the ill-posedness of the underlying inverse problem (which implies that $A_{\Phi}$ is increasingly ill-conditioned). Although we do not have an explicit asymptotic estimate of $\norm[\Psi_h^\gamma]$ with respect to $h$,     we observe in our numerical experiments with the Wendland functions that the product $\norm[\Psi^\gamma_h] \, P_\phi(h)$ indeed stays bounded and still tends to zero as $h\to 0$, see Figure \ref{fig:Ploterrorestimate}, at least as long as $k$ is chosen large enough. In Figure \ref{fig:Ploterrorestimate}, we plotted $\norm[\Psi_h^\gamma]$ for an increasing number of grid points $n$ associated to the grid density measure $h$ for three different regular grid types (rectangular, circular and ellipse).     We clearly observe that the norm increases, with the rectangular grid showing the steepest increase. In addition we plotted $(P_\phi(h))^{-1}$ for the Wendland function $\phi^{2,k}$ with $k=5$, i.e. $P_\phi(h)=h^{2k+1}=h^{11}$.     
        The comparison shows that each $\norm[\Phi_h^\gamma]$ increases significantly slower than $(P_\phi(h))^{-1}$. Thus, $\norm[\Psi_h^\gamma] \, P_\phi(h)$, despite the increase of the first factor, overall still tends to zero as $h\to 0$. In particular, the same is expected for other (smoother) radial basis functions with even faster decaying $P_\phi$. Also note that no additional filter $F_\tau$ has yet been added in the computation of $\Psi_h^\gamma$, which will further attenuate the increase of $\norm[\Psi^\gamma_h]$. The comparison in Figure \ref{fig:Ploterrorestimate} also reveals that the increase of $\norm[\Psi_h^\gamma]$ can be softened by the grid choice. Although it suggests that the elliptic grid results in smaller condition numbers than circular or rectangular grid, the grid should in practice be chosen in accordance to the studied object. For instance, circular objects might be overall better represented by a circular grid, hence resulting in smaller approximation and ultimately smaller reconstruction errors.     
 
\begin{figure}[!h]
    \centering
    \includegraphics[width=0.9\linewidth]{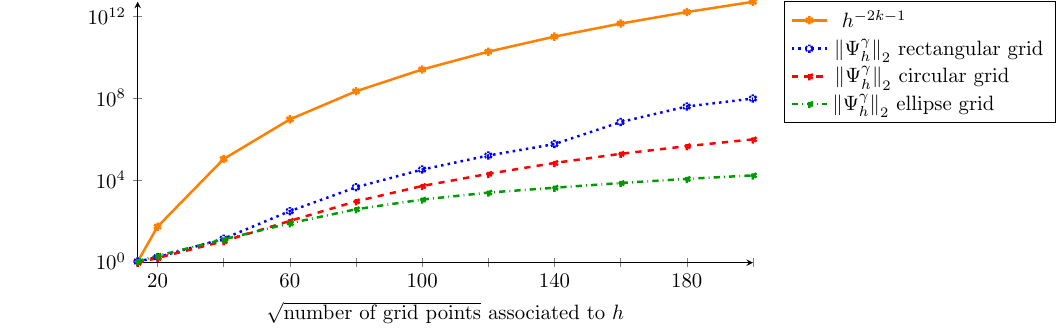}    \caption{$\norm[\Psi_h^\gamma]$ (dashed/dotted lines) for increasing number of interpolation points for regular grids, as well as $(P_\phi(h))^{-1}=h^{-2k-1}$ (solid line) for the associated grid density $h$ with $P_\phi$ for the Wendland function with $k=5$.  }
    \label{fig:Ploterrorestimate}
\end{figure}
 
\subsubsection{Smoothing data to enforce dominance}
The estimates for the interpolation error, and hence for the reconstruction error, hold only for dominated functions, cf.  assumption \ref{assumption:phi}. However, it is clear from the definition of $\mathcal{N}_\phi(\Omega)$, that not every $L_2$-function can be dominated by a given $\phi$. Nevertheless, if a given $f\in L_2(\Omega)$ fails to be dominated by $\phi$, this property still applies to its mollified counterparts, as the following lemma shows. 
\begin{lemma}
	\label{lem:dominate}
	Let $f\in L_2(\RR^2)$ be compactly supported in $\Omega$, let $\phi$ {\color{black}$\in C(\RR^2)\cap L_1(\RR^2)$ be a positive definite radial basis function}, and let $\chi\in C(\RR^2)\cap L_2(\RR^2)$ be a mollifier. 	If the Fourier transforms 	of $\phi$ and 	$\chi $ satisfy
	\begin{equation}
		{\left|\widehat{\chi}(\xi)\right|^2}{\,\left(\widehat{\phi}(\xi)\right)^{-1}}\leq c_{\phi,\chi}\text{ a.e.}\label{eq:smooth_cond}
	\end{equation}
	for some $c_{\phi,\chi}\in\RR^+$, then the convolution $f*\chi$ is dominated by $\phi$. \end{lemma}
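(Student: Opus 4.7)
The plan is to verify the three conditions defining $\mathcal{N}_\phi(\RR^2)$ for the convolution $f * \chi$: continuity, membership in $L_2$, and finiteness of the weighted Fourier integral \eqref{eq:native_space}. The main tool is the convolution theorem, $\widehat{f * \chi} = \widehat{f}\,\widehat{\chi}$, which reduces everything to spectral estimates driven by the hypothesis \eqref{eq:smooth_cond}.

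First I would observe that $\widehat{\chi}\in L_\infty(\RR^2)$. Since $\phi\in L_1(\RR^2)$, its Fourier transform is bounded, and positive definiteness of $\phi$ (invoked via the characterization of \cite{wendland05}, Theorem 6.11) ensures $\widehat{\phi}>0$, so the quotient in \eqref{eq:smooth_cond} is meaningful. The pointwise inequality then yields $|\widehat{\chi}(\xi)|^2\leq c_{\phi,\chi}\,\widehat{\phi}(\xi)\leq c_{\phi,\chi}\|\phi\|_{L_1}$ almost everywhere. Moreover, since $f$ is compactly supported and in $L_2$, it is also in $L_1$, so $\widehat{f}\in L_2\cap L_\infty$.

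Next, I would establish $f * \chi \in L_2(\RR^2)\cap C(\RR^2)$. The $L_2$ claim follows from Plancherel combined with the previous bound: $\widehat{f}\,\widehat{\chi}\in L_2$ because $\widehat{f}\in L_2$ and $\widehat{\chi}\in L_\infty$. For continuity, Cauchy--Schwarz gives $\|\widehat{f}\,\widehat{\chi}\|_{L_1}\leq \|\widehat{f}\|_{L_2}\,\|\widehat{\chi}\|_{L_2}<\infty$, so $\widehat{f*\chi}\in L_1$ and consequently $f*\chi$ is continuous (indeed in $C_0(\RR^2)$) by the Riemann--Lebesgue lemma. Restricting to $\Omega$ then places $f*\chi$ in $L_2(\Omega)\cap C(\RR^2)$ as required.

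Finally, the weighted integral is the key estimate. Using the convolution theorem and \eqref{eq:smooth_cond},
\begin{equation*}
\int_{\RR^2}\bigl|\widehat{f*\chi}(\xi)\bigr|^2\bigl(\widehat{\phi}(\xi)\bigr)^{-1}\d\xi
=\int_{\RR^2}|\widehat{f}(\xi)|^2\,\frac{|\widehat{\chi}(\xi)|^2}{\widehat{\phi}(\xi)}\,\d\xi
\leq c_{\phi,\chi}\int_{\RR^2}|\widehat{f}(\xi)|^2\,\d\xi
= c_{\phi,\chi}\,\|f\|_{L_2}^2,
\end{equation*}
which is finite since $f\in L_2(\Omega)$. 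Hence $f*\chi\in\mathcal{N}_\phi(\RR^2)$, i.e.\ it is dominated by $\phi$. The only subtlety I anticipate is the initial step of ensuring $\widehat{\chi}\in L_\infty$, which is what allows the convolution theorem to be applied cleanly and the continuity of $f*\chi$ to be recovered; everything else reduces to an application of Parseval.
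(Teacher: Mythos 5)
Your proof is correct and follows essentially the same route as the paper's: the convolution theorem turns the weighted Fourier integral into $\int |\widehat{f}|^2\,|\widehat{\chi}|^2\,(\widehat{\phi})^{-1}\,\d\xi \leq c_{\phi,\chi}\norm[f]_{L_2}^2$ via hypothesis \eqref{eq:smooth_cond} and Plancherel, which is exactly the paper's one-line argument (up to normalization constants). Your additional verification that $f\ast\chi$ lies in $L_2(\RR^2)\cap C(\RR^2)$ is a correct and slightly more careful touch that the paper leaves implicit.
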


\emph{Remark:} Please note that $\chi$ is independent of the mollifier considered for the reconstruction process.
\begin{proof}
	With the standard convolution theorem, we obtain 
	\begin{align*}\frac{1}{4\pi^2}\int\limits_{\RR^2}\left|\widehat{f\ast \chi}(\xi)\right|^2\,\left(\widehat{\phi}(\xi)\right)^{-1}\d\xi 
		&=\frac{1}{4\pi^2}\int\limits_{\RR^2}\left|2\pi\,\widehat{f}(\xi)\, \widehat{\chi}(\xi)\right|^2\cdot \left(\widehat{\phi}(\xi)\right)^{-1}\d\xi\leq c_{\phi,\chi}\norm[f]_{L_2}^2,
	\end{align*}
	hence $f\ast \chi$ is dominated by $\phi$. \end{proof}	

We want to discuss two examples for pairs of $\phi$ and $\chi$ leading to dominance:
\begin{itemize}
	\item If the Fourier transform $\widehat{\phi}$ of the basis function $\phi$ is continuous on $\RR^2$ (including $0$), then we can choose $\chi = \phi$, since 	$c_{\phi,\chi}:=\sup\limits_{\xi\in \RR^2}\left|\widehat{\phi}(\xi)\right|<\infty$
	due to the boundedness of $\widehat{\phi}$. Consequently, one can always project an $L_2$ function into the space of dominated functions (i.e. $L_2(\Omega)$ can be continuously embedded into $\mathcal{N}_\phi(\Omega)$.\\
						\item Consider $\phi$ and $\chi$ chosen as Gaussians, more precisely 		$$
	\ds \phi(\vx;\mu)=\frac{1}{2\pi \mu	}e^{-\tfrac{\norm[\vx]^2_2}{2\mu}} \text{ and } \ds \chi(\vx;\nu)=\frac{1}{2\pi \nu}e^{-\tfrac{ \norm[\vx]^2}{2\nu}}
	$$ 
	with parameters $\nu, \mu >0$. With their respective Fourier transforms $\ds\widehat{\phi}(\xi;\mu)=\frac{1}{2\pi} e^{-\tfrac{\mu^2}{2}\norm[\xi]^2}$ and $\ds \widehat{\chi}(\xi;\nu)=\frac{1}{2\pi} e^{-\tfrac{\nu^2}{2}\norm[\xi]^2}$, it holds  $$\ds 
	{\Big|\widehat{\chi}(\xi;\nu)\Big|^2}{\, \left|\widehat{\phi}(\xi;\mu)\right|^{-1}}=\frac{1}{2\pi} e^{\left(\tfrac{\mu}{2}-\nu\right)\norm[\xi]^2}.$$ Hence $f*\chi(\,\cdot\,;\nu)$ is dominated by $\phi(\,\cdot\,;\mu)$ whenever $\nu\geq \frac{\mu}{{2}}$.	Hence, the additional smoothing according to $\chi$ to ensure dominance needs to be stronger than the smoothing stemming from the basis function. 	\end{itemize}
 
The dominance of $f$ by $\phi(\cdot,\mu)$ is necessary to guarantee a controlled reconstruction error in \Cref{theorem:recerror}.According to \Cref{lem:dominate} one can consider the convolution by $\phi_\nu \ast  f$ -- where $\phi_\nu$ is chosen so that $\vert\hat{\phi_\nu}\vert^2(\hat{\phi}(\cdot;\mu))^{-1}$ is bounded -- in the forward model to enforce it. This corresponds to the following rework of the forward problem:
$$ 
\Xi_m  \R_\Phi f^\nu 
:= \Xi_m  \R_\Phi \left( \phi_\nu \ast  f \right)
= \Xi_m  \left( \bar\phi_\nu \ast  \R_\Phi f \right)
= \Xi_m (\bar\phi_\nu \underset{s}{\ast} g)
=: \Xi_m g^\nu
\approx \Xi_m (\bar\phi_\nu) \underset{m}{\ast} \gvec
=:\gvec^\nu
$$
where $\bar{\phi}_\nu(s) := \R_\Phi\phi_\nu(s,\mathbf{0})$ using that $\phi_\nu$ is radial and $\underset{m}{\ast}$ denotes the discrete convolution with respect to the $(s_j)_j$. We note that $\Xi_m g^\nu\neq \gvec^\nu$ here since $g$ is compactly supported and therefore not band-limited. However, due to the smoothness properties of $\R_\Phi$ -- which maps $L_2(\Omega)$ into $H^{\frac12}(Z_\Phi,w^{-1})$ -- and decay of the Fourier transform of $g$, one can expect a small approximation error in the discretization process.  
\\

With this general setup, we can now consider noisy data $\gvec_\delta = \gvec + \delta$ instead of $\gvec$. Obviously, one can write
\begin{align*}
\Vert (\Psi^\gamma)^\top \gvec_\delta^\nu - \fvec^{\nu,\gamma}_{\vz} \Vert_{\RR^r}
&\leq
\Vert (\Psi^\gamma)^\top (\gvec_\delta^\nu - \Xi_m g^\nu) \Vert_{\RR^r} +
\Vert (\Psi^\gamma)^\top \Xi_m g^\nu - \fvec^{\nu,\gamma}_{\vz} \Vert_{\RR^r} \\
&\leq
\Vert (\Psi^\gamma)^\top \Vert_{\RR^m \to \RR^r} \left( \Vert \Xi_m (\bar\phi_\nu) \underset{m}{\ast} \delta \Vert_{\RR^m} +  \Vert \gvec^\nu - \Xi_m g^\nu \Vert_{\RR^m}\right) +
\Vert (\Psi^\gamma)^\top \Xi_m g^\nu - \fvec^{\nu,\gamma}_{\vz} \Vert_{\RR^r}.  
\end{align*}
The second term on the right-hand side is treated by \Cref{theorem:recerror}. The first term estimates the impact of the LARK on the noise level after smoothing and on the aliasing error in the discretization process $\Xi_m g^\nu$.  The {\color{black} is a classic problem in numerical analysis and is controlled by the smoothness of $\phi_\nu$ and choosing $\nu$ appropriately.
 
Regarding the noise level $\delta$, it is clear with such an ill-conditioned problem that it must be as small as possible and will force a trade-off between stability and resolution here parametrized by $\nu$.
 
\begin{remark} We want to quickly discuss how this pre-smoothing strategy to ensure dominance relates to CLARK. The latter, as introduced in \Cref{sec:denoising}, supports the LARK by constraining the solution space and corresponds to a denoising (hence a pre-smoothing) step of the data $\gvec^\delta$. In the continuous case, when using the total-variation penalty functional in \Cref{eq:reg-AI-discrete}, we map $\S_\gamma(g)$ into the space of bounded variation (BV) functions. Assuming that $\S_\gamma(\D_\lambda(g^\delta))$ is additionally absolutely continuous, then its Fourier transform is $o(\Vert \xi\Vert^{-1})$. Since $\S_\gamma$ is continuous, this property on the decay of the Fourier coefficients translates into the denoised data $\D_\lambda(g^\delta)$. This decay might not be enough to ensure dominance in general (cf. Definition 3). However, as we will see in the numerical results below, it appears that when working with real data and choosing $\phi$ as a Wendland function, the \textit{smoothing} induced by the total-variation term in the CLARK suffices.
\end{remark}
 
\subsubsection{Numerical results}
 
First, we want to illustrate that the described interpolation approach with sufficiently smooth radial basis functions, like Gaussian or Wendland function, solves the representation issue described in the beginning of the Section. To this end, Figure \ref{fig:LARK_on_diff_grids} shows LARK and CLARK reconstructions for $30^\circ$ and $40^\circ$ missing angles for the cylindrical real data from IzfP and for the analytic Shepp-Logan data for different regular grids, namely rectangular, circular and ellipse grid. In comparison with the results in Fig. \ref{fig:AluSi_representation_issue} regarding the pixel grid interpolation, we see a substantial improvement. The missing features are better recovered and the limited-angle artefacts significantly reduced. When using LARK, there remain the wave-type artefacts stemming from the singular functions to small singular values which however can be further compensated for by the CLARK approach. The reconstruction results, in particular those obtained with LARK alone, also offer insight into the influence of the grid choice. LARK removes the limited-angle artefacts and missing regions slightly better for a circular grid in the case of cylinders \Cref{fig:LARK_on_diff_grids}(b)) and for an elliptic grid in the case of the Shepp-Logan (\Cref{fig:LARK_on_diff_grids}(f)). We add that we choose the same major and minor axes as for the largest ellipse in the Shepp-Logan phantom but did not enforce the grid points to match the edges. {\color{black}The same applies to the circular grid for the circular object regarding the Fraunhofer data.} We therefore have here an intrinsic prior on shape but not position. The right grid for the right shape is expected to help the accuracy of the operator $\Pi_n$ and deliver better representability of the data. It is important to note that for complex objects such as the Shepp-Logan or later tests, the grid tends to be less important for the inside features. This can be explained by the importance of the edges of $f$ in the data $\gvec$ which are encoded in strong discontinuities. 

\begin{figure}[h!]
    \centering
    \includegraphics[width=\linewidth]{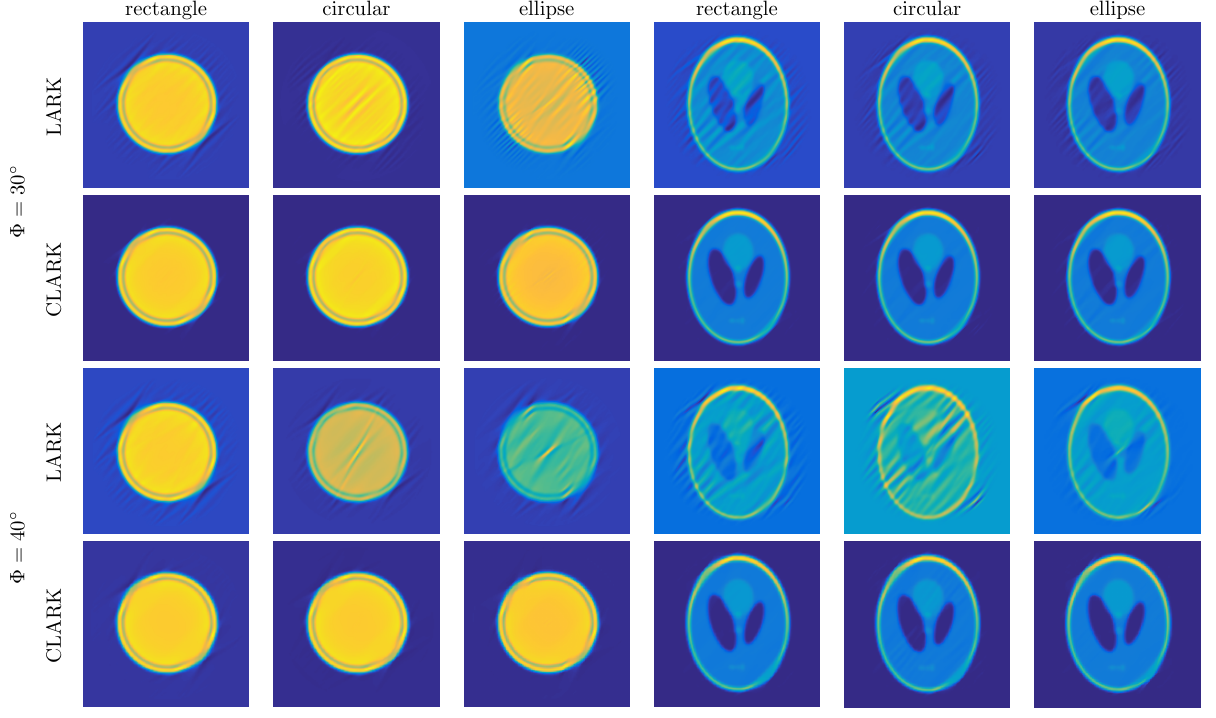}
    \caption{LARK and CLARK reconstructions for $\Phi=\{30^\circ,40^\circ\}$ for the real data from IzfP and for the analytic data on the Shepp-Logan phantom for different grids.}
    \label{fig:LARK_on_diff_grids}
\end{figure}
 
To illustrate the potential of our approach, we further provide reconstruction results from real data for a more complex object than the symmetric cylinder, more precisely from a real data set measured for the Helsinki tomography challenge 2022 
\cite{helsinkiDataSet,htc22}. Figure \ref{fig:csrbf_helsinki_compAngularComp} shows the respective results of standard FBP as well as of both LARK and CLARK for different $\Phi$ ranging from $10^\circ$ to $50^\circ$ missing angular range. In light of the geometry of the object, we chose a regular circular grid for the interpolation. Despite the data imperfections (measurement noise, model imperfections, etc.), the missing features are in fact recovered and the characteristic limited-angle streak artefacts suppressed. As observed before, CLARK further reduces the wave-type artefacts present in the LARK-reconstructions. Only for $\Phi=50^\circ$, some features in the interior begin to fade and the density values can no longer be recovered correctly. Nevertheless, in comparison with the classical FBP result, LARK and CLARK still recover most missing edges for $\Phi=50^\circ$.  Overall, the results demonstrate that our approach can in fact handle limited data from real measurements. The precise angular range, up to which good results can be achieved, depends on the noise level and potential model imperfections, reflecting the severely ill-posed nature of the underlying inverse problem.
 
\begin{figure}
	\includegraphics[width=0.9\textwidth]{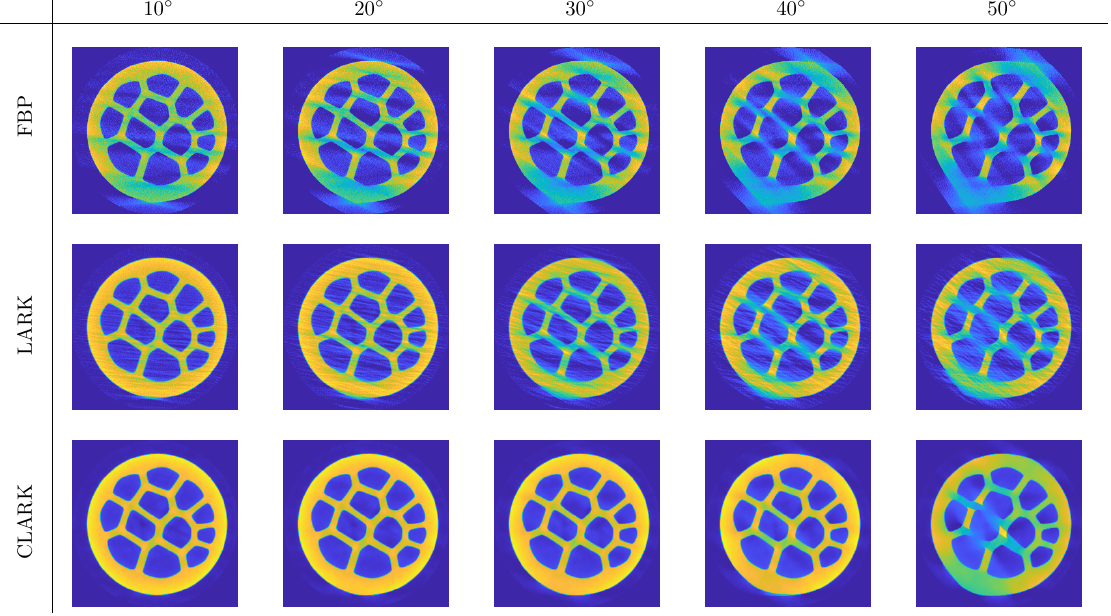}
	\caption{\label{fig:csrbf_helsinki_compAngularComp} Reconstruction results for real data from the Helsinki Tomography Challenge 2022 for different angular ranges.}
\end{figure}
 
\section{Conclusion}

This article provides a solver for limited-angle CT based on the method of the approximate inverse -- a versatile framework that designs a regularized inverse by using reconstruction kernels obtained as the solution of suited auxiliary problems. The kernels for limited-angle CT (LARK) were computed using the singular value decomposition of the forward operator. The inherent stability issues due to the severely ill-posed nature of the problem were tackled fourfold to enable applicability to real measured data: 
 
\begin{enumerate}[label=(\roman*)]
    \item the smallest singular values are attenuated via a spectral filter,
    \item the prescribed mollifier controls the high frequency components of the solution,     \item a pre-smoothing step on the measured data enforces representability of the function $f$ in a basis of smooth radial basis functions,
    \item finally, an additional denoising step on the data utilizing, for instance, the total-variation functional, further constrains the solution space, resulting in the constrained kernel approach (CLARK). \end{enumerate}
Our numerical validation on synthetic and real data demonstrates that this whole construction can solve the limited-angle problem for substantial angular restrictions but will show limits for large restrictions with 
significant noise levels. We can mention that the total-variation penalty terms used here in the CLARK remains limited, as it cannot distinguish between edges and wave artefacts. At the core of our future research, the design of a more suitable \textit{range projector} and denoising functionals could improve the stability of the approach and allow improved reconstructions also for larger $\Phi$. More generally, 
supplementary information is required to solve the problem further, as is done with deep learning techniques. With this in mind, the proposed method can serve as a new foundation for the further development of modern, learning-based algorithms. 

\section*{Acknowledgements}

The authors would like to thank and remember late Prof. Dr. Alfred K. Louis who pioneered this work in many ways, from deriving the singular value decomposition of the limited-angle Radon transform in 1986, building the Approximate Inverse ten years later to  {\color{black} engaging in valuable discussions in the early stages of this research work.} 

The work of the first and second author was funded by Deutsche Forschungsgemeinschaft (DFG, German Research Foundation) under Germany’s Excellence Strategy—EXC 2075 - 390740016. The first author was further supported by the Sino-German Mobility Programme (M-0187) by the Sino-German Center for Research Promotion.\\
 
\printbibliography

\appendix
\section{Appendix}
 
\subsection{Radon Transform of radial monomials} \label{app:R_monomial}

In order to compute our discrete operator $A_\Phi$ we have to evaluate the Radon transform of the Wendland functions. As can be seen in \cite{wendland05}, Theorem 9.12, the functions $\phi(\cdot;\mu,k)$ can be represented by the radial polynomial 
        \begin{align*}
               \phi(\vx;\mu,k) &= c_\mu \sum\limits_{l=0}^{3k+2}d_{l,k} \left(\frac{\Vert \vx \Vert}{\mu}\right)^l \quad \text{for }\Vert \vx \Vert \leq \mu,
    \end{align*}
    where the coefficients $d_{l,m}$  satisfy     the recurrence $d_{l,m+1}= -\dfrac{d_{l-2,m}}{l}$ for $0\leq m\leq k-1$  and $2\leq l\leq k+2m+4$ with
$$
d_{l,0}=(-1)^l \binom{k+2}{l},\quad 0\leq l\leq k+2, \qquad
d_{0,m+1}= \sum\limits_{l=0}^{k+2m+2}\dfrac{d_{l,m}}{l+2},  
\qquad \text{and} \quad d_{1,m+1}=0.
$$     
                            Hence it is sufficient to compute the Radon transform of radial monomials.

\begin{lemma}\label{lem:R_monomial}
Let $f^{(l)}(\vx) = \Vert \vx \Vert ^{l} \chi_\alpha(\vx)$, with $\chi_\alpha$ the characteristic function of the disk $\{\vx \, :\, \Vert\vx\Vert\leq \alpha\}$. Then the Radon transform of $f^{(l)}$,  for $l \geq -1$ and $\vert s\vert < \alpha$, satisfies
$$
\R[f^{(l)}](s) = 2 \vert s \vert^{l+1} \int_0^{\arccos\frac{\vert s \vert}{\alpha} } \sec^{l+2} \theta \, d\theta = 2 \vert s \vert^{l+1} I_{l+2}\left(s,\alpha\right)
$$
where $I_k(s,\alpha)$ satisfies for \( k \ge  1 \) the recurrence  
$$
I_{k+2}\left(s,\alpha\right)
= \frac{1}{k+1} \left( \left(\frac{\alpha}{\vert s\vert}\right)^{k} \sqrt{\frac{\alpha^2}{s^2} - 1} + k\cdot  I_{k}\left(s,\alpha\right) \right)
$$
with $I_1(s,\alpha) = \cosh^{-1}\left(\alpha/s\right)$ and $I_2(s,\alpha) = \sqrt{\frac{\alpha^2}{s^2} - 1}$.

\end{lemma}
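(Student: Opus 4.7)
The plan is to compute the Radon transform directly from its definition, exploiting the rotation invariance of $f^{(l)}$, and then derive the recurrence by the standard reduction formula for powers of secant.

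First, because $f^{(l)}(\vx) = \|\vx\|^{l}\chi_\alpha(\vx)$ is radial, $\R[f^{(l)}](s,\theta)$ is independent of $\theta$, so I may take $\theta = (1,0)^\top$ and parametrize the line $\{\vx : \vx^\top\theta = s\}$ by the coordinate $t = x_2$. For $|s|<\alpha$ the line meets the disk $\{\|\vx\|\leq \alpha\}$ over $t\in[-\sqrt{\alpha^2-s^2},\sqrt{\alpha^2-s^2}]$, so by symmetry
\begin{equation*}
\R[f^{(l)}](s) \;=\; 2\int_0^{\sqrt{\alpha^2-s^2}}(s^2+t^2)^{l/2}\,\d t.
\end{equation*}
Substituting $t = |s|\tan\theta$ (so $\d t = |s|\sec^2\theta\,\d\theta$ and $s^2+t^2 = s^2\sec^2\theta$), the upper limit transforms via $\sec\theta = \alpha/|s|$ into $\theta = \arccos(|s|/\alpha)$, and the integrand becomes $|s|^{l+1}\sec^{l+2}\theta$. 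This yields the claimed identity
\[\R[f^{(l)}](s) = 2|s|^{l+1}\int_0^{\arccos(|s|/\alpha)}\sec^{l+2}\theta\,\d\theta = 2|s|^{l+1}I_{l+2}(s,\alpha).\]

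For the recurrence, I would apply the classical reduction formula obtained by integrating $\sec^{k+2}\theta = \sec^k\theta\cdot\sec^2\theta$ by parts with $u = \sec^k\theta$ and $\d v = \sec^2\theta\,\d\theta$; using $\sec^2\theta = 1+\tan^2\theta$ to reorganize the resulting $\tan^2$-term gives
\begin{equation*}
\int\sec^{k+2}\theta\,\d\theta = \frac{\sec^k\theta\,\tan\theta}{k+1} + \frac{k}{k+1}\int\sec^k\theta\,\d\theta.
\end{equation*}
Evaluating the boundary term at $\theta = 0$ (which vanishes) and at $\theta = \arccos(|s|/\alpha)$, where $\sec\theta = \alpha/|s|$ and $\tan\theta = \sqrt{\alpha^2/s^2 - 1}$, produces exactly
\[I_{k+2}(s,\alpha) = \frac{1}{k+1}\left(\left(\frac{\alpha}{|s|}\right)^{k}\sqrt{\frac{\alpha^2}{s^2}-1} + k\,I_k(s,\alpha)\right).\]

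Finally, for the base cases, $I_2(s,\alpha) = \tan\theta\big|_0^{\arccos(|s|/\alpha)} = \sqrt{\alpha^2/s^2 - 1}$, and $I_1(s,\alpha) = \ln|\sec\theta+\tan\theta|\big|_0^{\arccos(|s|/\alpha)} = \ln\bigl(\alpha/|s| + \sqrt{\alpha^2/s^2 - 1}\bigr) = \cosh^{-1}(\alpha/|s|)$, matching the statement. I do not anticipate a genuine obstacle here: the computation is entirely mechanical once rotation invariance is used to reduce to a one-dimensional integral. The only points requiring care are tracking the absolute values in $|s|$ (the integrand depends only on $s^2$, so the $|s|^{l+1}$ factor is correct for both signs of $s$) and noting that the reduction formula requires $k\geq 1$ for the $k=0$ initialization via $I_1$ to kick in, consistent with the statement.
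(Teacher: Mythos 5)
Your proof is correct and follows essentially the same route as the paper: the paper first invokes the radial-function formula $\R[f^{(l)}](s)=2\int_{|s|}^\alpha r^{l+1}(r^2-s^2)^{-1/2}\,\d r$ and substitutes $r=|s|\sec\theta$, whereas you parametrize the line directly and substitute $t=|s|\tan\theta$ — the same change of variables in different coordinates, followed by the identical secant reduction formula and base cases.
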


\begin{proof} 
For radial functions, the Radon transform simplifies to
$$
\R[f^{(l)}](s) = 2 \int_{\vert s\vert}^\alpha \frac{r^{l+1}}{\sqrt{r^2-s^2}} \mathrm{d}r, \quad \vert s \vert  < \alpha.
$$
By integrating this expression via the substitution
$
r = \vert s \vert \sec \theta, \quad \theta \in \left[ 0, \arccos\left( \frac{\vert s \vert}{\alpha} \right) \right],
$
such that $dr = \vert s \vert \sec \theta \tan \theta \, d\theta$ and $\sqrt{r^2 - s^2} = \vert s \vert\tan \theta$, the integral yields
$$
\R[f^{(l)}](s) =2 \vert s \vert^{l+1} \int_0^{\arccos\frac{\vert s \vert}{\alpha} } \sec^{l+2} \theta \, d\theta = 2 \vert s \vert^{l+1} I_{l+2}\left(s,\alpha\right).
$$
Direct computations deliver
$$
I_1(s,\alpha) = \cosh^{-1}\left(\alpha/s\right) \qquad \text{and} \qquad
I_2(s,\alpha) = \sqrt{\frac{\alpha^2}{s^2} - 1}.
$$
The result follows from the standard recurrence for powers of secant, namely
$$
\int \sec^{l+2} \theta \, d\theta = \frac{\sec^{l} \theta \tan \theta}{l+1} + \frac{l}{l+1} \int \sec^{l} \theta \, d\theta ,
$$
which leads to the asserted recurrence
\begin{align*}
I_{l+2}\left(s,\alpha\right)
&= \frac{1}{l+1} \left( \left. \sec^{l} \theta \tan \theta \right\vert_{ \theta = \arccos\frac{\vert s \vert}{\alpha}} + l\cdot  I_{l}\left(s,\alpha\right) \right) 
= \frac{1}{l+1} \left( \left(\frac{\alpha}{\vert s\vert}\right)^{l+1} \sqrt{1 - \frac{\vert s \vert^2}{\alpha^2} } + l\cdot  I_{l}\left(s,\alpha\right) \right). \\
\end{align*}
\end{proof}

\end{document}